\documentclass[12pt,draftcls,onecolumn]{IEEEtran}

\usepackage{subfigure}
\usepackage{amssymb, amsmath, amsfonts}
\usepackage{empheq}
\usepackage{caption, cite}

\usepackage{algorithm}
\usepackage{algorithmic}

\usepackage{amsthm}
\usepackage{tikz}

\usetikzlibrary{external}
\usepackage{pgfplots}
\usepackage{cases}

\newtheorem{theorem}{Theorem}
\newtheorem{definition}{Definition}
\newtheorem{lemma}{Lemma}

\newtheorem{corollary}{Corollary}
\newtheorem{remark}{Remark}
\newtheorem{assumption}{Assumption}

\newlength\figureheight
\newlength\figurewidth

\ifnum\pdfshellescape=1
\tikzexternalize[prefix=tikzpdf/]
\fi

\DeclareFontFamily{OT1}{pzc}{}
\DeclareFontShape{OT1}{pzc}{m}{it}{<-> s * [1.000] pzcmi7t}{}
\DeclareMathAlphabet{\mathpzc}{OT1}{pzc}{m}{it}

\newcommand{\A}{{\mathcal{A}}}

\newcommand{\R}{{\mathbb{R}}}
\newcommand{\E}{{\mathpzc{E}}}
\newcommand{\bS}{{\mathpzc{S}}}

\newcommand{\I}{{\mathcal{I}}}
\newcommand{\M}{{\mathcal{M}}}


\DeclareMathOperator{\inte}{int}

\DeclareMathOperator{\bal}{Bal}
\DeclareMathOperator{\ebal}{EBal}

\DeclareMathOperator{\dom}{dom}
\DeclareMathOperator{\supp}{supp}

\newcommand\addtag{\refstepcounter{equation}\tag{\theequation}}

\makeatletter
\newcommand{\rmnum}[1]{\romannumeral #1}
\newcommand{\Rmnum}[1]{\expandafter\@slowromancap\romannumeral #1@}
\makeatother

\title{{Binary Hypothesis Testing with Byzantine Sensors:
Fundamental Trade-off Between Security and Efficiency}}

\author{Xiaoqiang Ren$^*$, Jiaqi Yan$^*$, and Yilin Mo$^{*\dag}$
\thanks{$*$: Xiaoqiang Ren, Jiaqi Yan and Yilin Mo are with the School of Electrical and Electronics Engineering, Nanyang Technological University, Singapore. Emails: {xren@ntu.edu.sg,\,jyan004@e.ntu.edu.sg,\,ylmo@ntu.edu.sg}}%
  \thanks{$\dag$: Corresponding Author.}
}

\begin{document}
\maketitle
\begin{abstract}
  This paper studies binary hypothesis testing based on measurements from a set of sensors, a subset of which can be compromised by an attacker.
  The measurements from a compromised sensor can be manipulated arbitrarily by the adversary.
  The asymptotic exponential rate, with which the probability of error goes to zero, is adopted to indicate the detection performance of a detector.
  In practice, we expect the attack on sensors to be sporadic, and therefore the system may operate with all the sensors being benign for an extended period of time.
  This motivates us to consider the trade-off between the detection performance of a detector, i.e., the probability of error,  when the attacker is absent (defined as efficiency) and the worst-case detection performance when the attacker is present (defined as security).
  We first provide the fundamental limits of this trade-off, and then propose a detection strategy that achieves these limits.
  We then consider a special case, where there is no trade-off between security and efficiency. In other words, our detection strategy can achieve the maximal efficiency and the maximal security simultaneously.
  Two extensions of the secure hypothesis testing problem are also studied and fundamental limits and achievability results are provided: 1) a subset of sensors, namely ``secure'' sensors, are assumed to be equipped with better security countermeasures and hence are guaranteed to be benign; 2) detection performance with unknown number of compromised sensors.
  Numerical examples are given to illustrate the main results.

\end{abstract}
\begin{IEEEkeywords}
  Hypothesis testing, Security, Secure detection, Efficiency, Trade-off, Byzantine attacks, Fundamental limits
\end{IEEEkeywords}

\section{Introduction}
\emph{Background and Motivations:}
Network embedded sensors, which are pervasively used to monitor the system, are vulnerable to malicious attacks due to their limited capacity and sparsely spatial deployment.
An attacker may get access to the sensors and send arbitrary messages, or break the communication channels between the sensors and the system operator to tamper with the transmitted data.
Such integrity attacks have motivated many researches on how to infer useful information from corrupted sensory data in a secure manner~\cite{rawat2011collaborative, mo2015secure,teixeira2015secure}.
In this paper, we follow this direction but with the focus on the trade-off between the performance of the inference algorithm when the attacker is absent and the ``worst-case'' performance when the attacker, which has the knowledge of the inference algorithm, is present.
We define two metrics, \emph{efficiency} and \emph{security}, to characterize the performance of the hypothesis testing algorithm (or detector) under the two scenarios respectively and analyze the trade-off between security and efficiency.

\emph{Our Work and its Contributions:}
We consider the sequential binary hypothesis testing based on the measurements from $m$ sensors.
It is assumed that $n$ out of these $m$ sensors may be compromised by an attacker, the set of which is chosen by the attacker and fixed over time. The adversary can manipulate the  measurements sent by the compromised sensors arbitrarily. According to Kerckhoffs's principle~\cite{shannon1949communication}, i.e., the security of a system should not rely on its obscurity, we assume that the adversary knows exactly the hypothesis testing algorithm used by the fusion center.
On the other hand, the fusion center (i.e., the system operator) only knows  the number of malicious sensors $n$, but does not know the exact set of the compromised sensors.

At each time $k$, the fusion center needs to make a decision about the underlying hypothesis based on the possibly corrupted measurements collected from all sensors until time $k$.
Given a hypothesis testing algorithm at the fusion center (i.e., a measurements fusion rule), the worst-case probability of error is investigated, and the asymptotic exponential decay rate of the error, which we denote as the ``security'' of the system, is adopted to indicate the detection performance.
{On the other hand, when the attacker is absent, the detection performance of a hypothesis testing algorithm, i.e., the asymptotic exponential decay rate of the error probability,  is denoted by the ``efficiency".  }

We focus on the trade-off between efficiency and security.
In particular, we are interested in characterizing the fundamental limits of the trade-offs between efficiency and security and the detectors that achieve these limits. 

The main contributions of this work are summarized as follows:
\begin{enumerate}
  \item To the best of our knowledge, this is the first work that studies the trade-off between the efficiency and security of any inference algorithm.
  \item With mild assumptions on the probability distributions of the measurements, we provide the fundamental limits of the trade-off between the efficiency and security (Corollaries~\ref{corollary:S-E}~and~\ref{corollary:E-S}).       
      Furthermore, we present detectors, with low computational complexity, that achieve these limits (Theorem~\ref{theorem:achieve}).
    Therefore, the system operator can easily adopt the detectors we proposed to obtain the best trade-off between efficiency and security.
    Interestingly, in some cases, e.g., Gaussian random variables with same variance and different mean, the maximal efficiency and the maximal security can be achieved simultaneously (Theorem~\ref{theorem:SEsimultaneously}).
  \item Similar results, i.e., the fundamental limits of the trade-off and the detectors that possess these limits, are established with several different problem settings (Section~\ref{sec:extension}).
    This shows that our analysis techniques are insightful and may be helpful for the future related studies.
\end{enumerate}

\emph{Related Literature:}
{
A sensor is referred to as a Byzantine sensor if its messages to the fusion center are fully controlled by an adversary\footnote{{In practice, to manipulate the data of a sensor, an adversary may attack the sensor node itself or break the communication channel between the sensor and the fusion center. In this paper, we do not distinguish these two approaches.} }.
Recently,
detection with Byzantine sensors has been studied extensively in~\cite{marano2009distributed,kailkhura2015distributed, kailkhura2015asymptotic, abdelhakim2014distributed, soltanmohammadi2013decentralized,   soltanmohammadi2014fast, mo2014resilient,  vamvoudakis2014detection, abrardo2016game, Jiaqi_CDC_detection}, among which~\cite{marano2009distributed, kailkhura2015distributed, kailkhura2015asymptotic} took the perspective of an attacker and aimed to find the most effective attack strategy,~\cite{abdelhakim2014distributed, soltanmohammadi2013decentralized, soltanmohammadi2014fast, mo2014resilient} focused on designs of resilient detectors, and~\cite{vamvoudakis2014detection, abrardo2016game, Jiaqi_CDC_detection} formulated the problem in a game-theoretic way.
The main results of~\cite{marano2009distributed} are the critical fraction of Byzantine sensors that blinds the fusion center, which is just the counterpart of the breakdown point in robust statistics~\cite{huber2011robust}, and the most effective attack strategy that minimizes the asymptotic error exponent in the Neyman-Pearson setting, i.e., the Kullback--Leibler (K--L) divergence. Since the Byzantine sensors are assumed to generate independent and identical distributed (i.i.d.) data, the resulting measurements with minimum K--L divergence and the corresponding robust detector coincide with those in~\cite{huber1965robust}. Similar results were obtained in~\cite{kailkhura2015distributed, kailkhura2015asymptotic} by considering non-asymptotic probability of error in Bayesian setting and asymptotic Bayesian performance metric, i.e., Chernoff information, respectively. The authors in~\cite{abdelhakim2014distributed} focused on computation efficient algorithms to determine optimal parameters of the $q$-out-of-$m$ procedure~\cite{viswanathan1989counting} in large scale networks for different fractions of Byzantine sensors. More than two types of sensors were assumed in~\cite{soltanmohammadi2013decentralized, soltanmohammadi2014fast}. The authors thereof proposed a maximum likelihood  procedure, which is based on the iterative expectation maximization (EM) algorithm~\cite{dempster1977maximum}, simultaneously classifying the sensor nodes and performing the hypothesis testing. The authors in~\cite{mo2014resilient} showed that the optimal detector is of a threshold structure when the fraction of Byzantine sensors is less than $0.5$. A zero-sum game was formulated in each of~\cite{vamvoudakis2014detection, abrardo2016game, Jiaqi_CDC_detection}, among which a closed-form equilibrium point of attack strategy and detector was obtained in~\cite{Jiaqi_CDC_detection}, computation efficient and nearly optimal equilibrium point (exact equilibrium point only in certain cases) was obtained in~\cite{vamvoudakis2014detection}, and numerical simulations were used to study the equilibrium point in~\cite{abrardo2016game}.

While in~\cite{marano2009distributed,kailkhura2015distributed, kailkhura2015asymptotic, abdelhakim2014distributed, soltanmohammadi2013decentralized,   soltanmohammadi2014fast, abrardo2016game} the Byzantine sensors are assumed to generate malicious data independently, this work, as in~\cite{mo2014resilient,  vamvoudakis2014detection, Jiaqi_CDC_detection}, assumes that the Byzantine sensors may collude with each other. The collusion model is more reasonable since the attacker is malicious and will arbitrarily change the messages of the sensors it controlls. Notice also that compared to the independence model, the collusion model complicates the analysis significantly. Unlike~\cite{kailkhura2015distributed, kailkhura2015asymptotic, abdelhakim2014distributed, soltanmohammadi2013decentralized,   soltanmohammadi2014fast, abrardo2016game, vamvoudakis2014detection}, where the sensors only send binary messages, this work, as in~\cite{marano2009distributed,mo2014resilient, Jiaqi_CDC_detection}, assumes that the measurements of a benign sensor can take any value.
Since the binary message model simplifies the structure of corrupted measurements, and, hence, implicitly limits the capability of an attacker, it is easier to be dealt with. This work differs from~\cite{mo2014resilient, Jiaqi_CDC_detection} as follows. The authors in~\cite{mo2014resilient} focused on one time step scenario. The analysis is thus fundamentally different and more challenging. On the contrary, in this work the hypothesis testing is performed sequentially and an asymptotic regime performance metric, i.e., the Chernoff information, is concerned.
A similar setting as in this work was considered in our recent work~\cite{Jiaqi_CDC_detection}. However,~\cite{Jiaqi_CDC_detection} focused on the equilibrium point. The performance (i.e., the security and efficiency) of the obtained equilibrium detection rule is merely one point of the admissible set that will be characterized in this paper.

Finally, we should remark that the aforementioned literature mainly focuses on designing algorithms in adversarial environment. However, those algorithms may perform poorly in the absence of the adversary comparing to the classic Neyman-Pearson detector or Naive Bayes detector. A fundamental question, which we seek to answer in this paper, is that how to design a detection strategy which performs ``optimally'' regardless of whether the attacker is present.


}

\emph{Organization:} In Section~\ref{sec:ProblemFormulation}, we formulate the problem of binary hypothesis testing in adversarial environments, in which the attack model, the performance indices and the notion of the efficiency and security are defined.
For the sake of completeness, we give a brief introduction the large deviation theory in Section~\ref{sec:prilimilary}, which is a key supporting technique for the later analysis.
The main results are presented in Section~\ref{sec:fundamentalLimitation}.
We first provide the fundamental limits of the trade-off between the efficiency and security.
We then propose detectors that achieve these limits.
At last, we show that the maximal efficiency and the maximal security can be achieved simultaneously in some cases. 
Two extensions are investigated in Section~\ref{sec:extension}.
After providing numerical examples in Section~\ref{sec:simulation}, we conclude the paper in Section~\ref{sec:conclusion}.

\emph{Notations}: $\R$ ($\mathbb{R}_{+}$) is the set of (nonnegative) real numbers.
$\mathbb{Z}_{+}$ is the set of positive integers.
The cardinality of a finite set $\mathcal I$ is denoted as $|\mathcal I|$.
For a set $\mathcal{A}\in\mathbb R^n$, $\inte(\mathcal{A})$ denotes its interior.
For any sequence $\{x(k)\}_{k=1}^\infty$, we denote its average at time $k$ as $\bar x(k) \triangleq \sum_{t=1}^k x(t)/k$. {For a vector $\mathbf x\in\R^n$, the support of $\mathbf x$, denoted by $\supp(\mathbf x)$, is the set of indices of nonzero elements:
\[\supp(\mathbf x) \triangleq \{i\in\{1,2,\ldots,n\}: \mathbf x_i \neq 0\}.\]}

\section{Problem Formulation} \label{sec:ProblemFormulation}
\label{sec:problem}
Consider the problem of  detecting a binary state $\theta \in \{ 0,\,1\}$ using $m$ sensors' measurements.
Define the measurement $\mathbf{y}(k)$ at time $k$ to be a row vector:
\begin{equation}
  \mathbf{y}(k)\triangleq \begin{bmatrix}
    y_1(k)&y_2(k)&\cdots&y_m(k)
  \end{bmatrix}\in \mathbb R^m,
  \label{eq:yvector}
\end{equation}
where $y_i(k)$ is the scalar measurement from sensor $i$ at time $k$.
For simplicity, we define $\mathbf{Y}(k)$ as a vector of all measurements from time $1$ to time $k$:
\begin{equation}
  \mathbf{Y}(k) \triangleq \begin{bmatrix}
    \mathbf{y}(1)&\mathbf{y}(2)&\cdots&\mathbf{y}(k)
  \end{bmatrix}\in \mathbb R^{mk}.
  \label{eq:bigY}
\end{equation}
Given $\theta$, we assume that all measurements $\{y_i(k)\}_{i=1,\ldots,m,\,k=1,2,\ldots}$ are independent and identically distributed (i.i.d.).
The probability measure generated by $y_i(k)$ is denoted as $\nu$ when $\theta = 0$ and it is denoted as $\mu$ when $\theta = 1$.
In other words, for any Borel-measurable set $\A\subseteq \mathbb R$, the probability that $y_i(k)\in \A$ equals $\nu(\A)$ when $\theta = 0$ and equals $\mu(\A)$ when $\theta = 1$.
We denote the probability space generated by all measurements $\mathbf{y}(1),\,\mathbf{y}(2),\,\dots$ as $(\Omega_y,\,\mathcal F_y,\,\mathbb P^{o}_\theta)$\footnote{The superscript ``o" stands for original, which is contrasted with corrupted measurements.}, where for any $l\geq 1$
\begin{align*}
  \mathbb P^{o}_\theta(y_{i_1}(k_1)\in \A_1&,\dots,  y_{i_l}(k_l)\in \A_l) \\
  &= \begin{cases}
    \nu(\A_1)\nu(\A_2)\dots\nu(\A_l)&\text{if }\theta = 0\\
    \mu(\A_1)\mu(\A_2)\dots\mu(\A_l)&\text{if }\theta = 1
  \end{cases},
\end{align*}
when $(i_j,k_j)\neq (i_{j'},k_{j'})$ for all $ j \neq j' $.
The expectation taken with respect to $\mathbb P^{o}_\theta$ is denoted by $\mathbb E^{o}_\theta$.
We further assume that $\nu$ and $\mu$ are absolutely continuous with respect to each other.
Hence, the log-likelihood ratio $\lambda:\mathbb R \rightarrow \mathbb R$ of $y_i(k)$ is well defined as
\begin{equation}
  \lambda(y_i) \triangleq \log\left(\frac{\rm d\mu}{\rm d\nu}(y_i)\right) ,
  \label{eq:loglikelihoodratio}
\end{equation}
where $d\mu/d\nu$ is the Radon-Nikodym derivative.

We define $f_k:\mathbb R^{mk}\rightarrow[0,\,1]$, the detector at time $k$, as a mapping from the measurement space $\mathbf{Y}(k)$ to the interval $[0,\,1]$.
When $f_k(\mathbf{Y}(k)) = 0$, the system makes a decision $\hat{\theta} = 0$, and when $f_k(\mathbf{Y}(k))=1$, $\hat{\theta} = 1$.
When $f_k(\mathbf{Y}(k)) = \gamma\in (0,\,1)$, the system then ``flips a biased coin'' to choose $\hat{\theta} = 1$ with probability $\gamma$ and $\hat{\theta} = 0$ with probability $1-\gamma$.
The system's strategy $f \triangleq (f_1,\,f_2,\,\cdots)$ is defined as an infinite sequence of detectors from time $1$ to $\infty$.

\subsection{Attack Model} \label{subsec:attackmodel}
{
Let the (\emph{manipulated}) measurements received by the fusion center at time $k$ be
\begin{equation} \label{eqn:biasinjection}
  \mathbf{y}'(k)= \mathbf y(k)+ \mathbf{y}^a(k),
\end{equation}
where $\mathbf{y}^a(k)\in\R^m$ is the bias vector injected by the attacker at time $k$. 
In the following, Assumptions~\ref{assumpt:sparse}--\ref{assumpt:measurement} are made on the attacker, among which Assumption~\ref{assumpt:sparse} is in essence the only limitation we pose.
\begin{assumption}[Spare Attack] \label{assumpt:sparse}
There exists an index set $\mathcal I\subset \M \triangleq\{1,2,\ldots,m\}$ with $|\I| = n$ such that  $\bigcup _{k=1}^{\infty}\supp(\mathbf{y}^a(k)) =\I$.  Furthermore, the system knows the number $n$, but it does not know the set $\mathcal I$.
\end{assumption}
We should remark that the above assumption does not pose any restrictions on the value of $y_i^a(k)$ if sensor $i$ is compromised at time $k$, i.e., the bias injected into the data of a compromised sensor can be arbitrary.

Assumption~\ref{assumpt:sparse} says that the attacker can compromise up to $n$ out of $m$ sensors at each time. It is practical to assume that the attacker possesses limited resources, i.e., the number of compromised sensors is (non-trivially) upper bounded, since otherwise it would be too pessimistic and the problem becomes trivial.
The quantity $n$ might be determined by the \textit{a priori} knowledge about the quality of each sensor.
Alternatively, the quantity $n$ may be viewed as a design parameter, which indicates the resilience level  that the system is willing to introduce;  the details of which are in Remark~\ref{remark:n}.
Notice also that since the worst-case attacks (over the set of compromised sensors and the attack strategy) are concerned (the performance metric will be introduced shortly), it is equivalent to replace the cardinality requirement $|\I| = n$ by $|\I|\leq n$.
We should note that in~\cite{kailkhura2015distributed, abdelhakim2014distributed, vamvoudakis2014detection}, it was also assumed that the number/fraction of malicious sensor nodes is known to the system operator.

Moreover, the set of compromised sensors is assumed to be fixed over time. Notice that if we assume that the set of compromised sensors has a fixed cardinality but is time-varying, i.e., there exists no a set like $\I$ to bound the compromised sensors, the attacker would be required to abandon the sensor nodes it has compromised, which is not sensible. Notice that in~\cite{abdelhakim2014distributed, soltanmohammadi2013decentralized,   soltanmohammadi2014fast}, it was assumed the set of malicious/misbehaving sensors is fixed as well. We should also note that though this work is concerned with asymptotic performances (i.e., the security and efficiency introduced later),  the numerical simulations in Section~\ref{sec:simulation} show that our algorithm indeed perform quite well in an non-asymptotic setup. Actually, if a finite-time horizon problem is considered and the time required for an attacker to control a benign sensor is large enough, then it is reasonable to assume that the set of compromised sensors is fixed.

In fact, the exactly same sparse attack model as in Assumption~\ref{assumpt:sparse} has been widely adopted  by literature dealing with Byzantine sensors, e.g., state estimation~\cite{fawzi2014secure,mishra2017secure}, and quickest change detection~\cite{fellouris2017efficient}.

Finally, we should note that we do not assume any pattern of the bias $y_i^a(k)$ for $i\in\I$, i.e., the malicious bias injected may be correlated across the compromised sensors and correlated over time. Compared to the independence assumption in~\cite{marano2009distributed,kailkhura2015distributed, kailkhura2015asymptotic, abdelhakim2014distributed, soltanmohammadi2013decentralized,   soltanmohammadi2014fast, abrardo2016game}, our assumption improves the effectiveness of the attacker and is more realistic in the sense that the attacker is malicious and will do whatever it wants.

\begin{remark}~\label{remark:n}
  The parameter $n$ can also be interpreted as how many bad sensors the system can and is willing to tolerate, which is a design parameter for the system operator.
  In general, increasing $n$ will increase the resilience of the detector under attack.
  However, as is shown in the rest of the paper, a larger $n$ may result in more conservative design and is likely to cause a performance degradation during normal operation when no sensor is compromised.
\end{remark}

\begin{assumption}[Model Knowledge]
The attacker knows the probability measure $\mu$ and $\nu$ and the true state $\theta$.
\end{assumption}
By the knowledge about the sensor, the attacker can develop the probability measure $\mu$ and $\nu$. To obtain the true state, the attacker may deploy its own sensor network. Though it might be difficult to satisfy in practice, this assumption is in fact conventional in literature concerning the worst-case attacks, e.g.,~\cite{marano2009distributed, fellouris2017efficient}. Nevertheless, this assumption is in accordance with the Kerckhoffs's principle.

\begin{assumption}[Measurement Knowledge]  \label{assumpt:measurement}
At time $k$, the attacker knows the current and all the historical measurements available at the compromised sensors.
\end{assumption}
Since the attacker knows the true measurement of a compromised sensor $i$, $y_i(k)$, it may set the fake measurement arrived at the fusion center $y'_i(k)$ to any value it wants by injecting $y_i^a(k)$.
One may also verify that all the results in this paper remain even if the attacker is ``strong'' enough where at time $k$, it knows measurements from all the sensors  $\mathbf Y(k)$.

An admissible attack strategy is any causal mapping from the attacker's available information to a bias vector that satisfies Assumption~\ref{assumpt:sparse}. This is formalized as follows.
%
Let $\I = \{i_1,i_2,\ldots,i_n\}$.
Define the true measurements of the compromised sensors from time $1$ to $k$ as
\begin{equation*}
 \mathbf Y_\mathcal I(k) \triangleq \begin{bmatrix}
   \mathbf  y_\mathcal I(1)& \mathbf y_\mathcal I(2)&\cdots& \mathbf y_\mathcal I(k)
  \end{bmatrix}\in\mathbb R^{|\mathcal I|k}
\end{equation*}
with
\begin{equation*}
  \mathbf y_\mathcal I(k) \triangleq \begin{bmatrix}
    y_{i_1}(k)& y_{i_2}(k)&\cdots& y_{i_n}(k)
  \end{bmatrix}\in\mathbb R^{|\mathcal I|}.
\end{equation*}
Similar to $\mathbf Y(k)$, $\mathbf Y'(k)$ ($\mathbf{Y}^a(k)$) is defined as all the manipulated (bias vector) from time $1$ to $k$.
The bias $\mathbf y^a(k)$ is chosen as a function of the attacker's available information at time $k$:
\begin{align} \label{eqn:attackdefine}
  \mathbf y^a(k) \triangleq g(\mathbf Y_{\mathcal I}(k), \mathbf Y^a(k-1), \mathcal I,\theta,k),
\end{align}
where $g$ is a function\footnote{The function $g$ is possibly random. For example, given the available information, the adversary can flip a coin to decide whether to change the measurement or not.} of ${\mathbf Y_{\I}(k)}, \mathbf Y^a(k-1), \mathcal I,\theta,k$ such that $\mathbf y^a(k)$ satisfies Assumption~\ref{assumpt:sparse}.
We denote $g$ as an admissible attacker's strategy. Notice that since time $k$ is an input variable and the available measurements $\mathbf Y_{\I}(k), \mathbf Y^a(k-1)$ are ``increasing" with respect to time $k$, the definition in~\eqref{eqn:attackdefine} does not exclude the time-varying attack strategy.
Denote the probability space generated by all manipulated measurements $\mathbf y'(1),\,\mathbf y'(2),\,\dots$ as $(\Omega,\,\mathcal F,\,\mathbb P_{\theta})$.
The expectation taken with respect to the probability measure $\mathbb P_{\theta}$ is denoted by $\mathbb E_{\theta}$.

}

\subsection{Asymptotic Detection Performance}
Given the strategy of the system and the attacker, the probability of error at time $k$ can be defined as
\begin{equation}  \label{eqn:errorProb}
  e(\theta,\mathcal I,k) \triangleq \begin{cases}
    \mathbb E_0 f_k(\mathbf Y'(k)) & \text{when }\theta = 0,\\
    1-\mathbb E_1 f_k(\mathbf Y'(k)) & \text{when }\theta = 1.
  \end{cases}
\end{equation}
Notice that $f_k$ could take any value from $[0,1]$.
Hence, the expected value of $f_k$ is used to compute the probability of error.
In this paper, we are concerned with the worst-case scenario.
As a result, let us define
\begin{align} \label{eqn:worstcase}
  \epsilon(k) \triangleq \max_{\theta=0,1,|\mathcal I|=n} e(\theta,\mathcal I,k).
\end{align}
In other words, $\epsilon(k)$ indicates the worst-case probability of error considering all possible sets of compromised sensors and the state $\theta$ {given the detection rule $f$ and attack strategy $g$.
Notice also that in accordance with Assumption~\ref{assumpt:sparse}, the set $\I$ in the above equation is fixed over time.}

Ideally, for each time $k$, the system wants to design a detector $f_k$ to minimize $\epsilon(k)$.
However, such a task can hardly be accomplished analytically since the computation of the probability of error usually involves numerical integration.
Thus, in this article, we consider the asymptotic detection performance in hope to provide more insight on the detector design.
Define the rate function as
\begin{equation} \label{eqn:RateObj}
  \rho \triangleq \liminf_{k\rightarrow\infty} - \frac{\log \epsilon(k)}{k}.
\end{equation}

Clearly, $\rho$ is a function of both the system strategy $f$ and the attacker's strategy $g$.
As such, we will write $\rho$ as $\rho(f,g)$ to indicate such relations.
Since $\rho$ indicates the rate with which the probability of error goes to zero, the system would like to maximize $\rho$ in order to minimize the detection error.
On the contrary, the attacker wants to decrease $\rho$ to increase the detection error.

\subsection{Interested Problems}
In practice, {the attacker may not be present consistently}. As a result, the system will operate for an extended period of time with all sensors being benign.
Thus, a natural question arises: is there any detection rule that has ``decent'' performance regardless of the presence of the attacker?
Or is there a fundamental trade-off between security and efficiency?
In other words, a detector that is ``good'' in the presence of an adversary will be ``bad'' in a benign environment.
This paper is devoted to answering this question.

Informally, the performance of a detection rule when there is no attacker at all is referred to by ``efficiency'', while the performance when the worst-case attacker (provided that the attacker knows the detection rule used by the system) is present is referred to by ``security''.
Mathematically speaking, given a system strategy~$f$, denote by $\mathpzc{E}(f)$ and $\mathpzc{S}(f)$ its efficiency and security respectively, which are formalized as follows:
\begin{align}
  \mathpzc{E}(f) \triangleq & \rho(f,g= \mathbf 0),          \\
  \mathpzc{S}(f) \triangleq & \inf_g\rho(f,g)
\end{align}
where $\mathbf 0\in\R^m$ is the zero vector.


\section{Preliminary: Large Deviation Theory }
\label{sec:prilimilary}
In this section, we introduce the large deviation theory, which is a key supporting technique of this paper.

To proceed, we first introduce some definitions.
Let $M_\omega(w)\triangleq \int_{\mathbb{R}^d} e^{w\cdot X}d\omega(X), w\in\mathbb{R}^d$ be the moment generating function for the random vector $X\in\mathbb{R}^d$ that has the probability measure~$\omega$, where $w\cdot X$ is the dot product.
Let $\dom_\omega \triangleq \{w\in\mathbb{R}^d|M_\omega(w)<\infty\}$ be the support such that $M_\omega(w)$ is finite.
Define the Fenchel--Legendre transform of the function $\log M_\omega(w)$ as
\begin{align} \label{eq:ratefunc}
  I_\omega(x) = \sup_{w\in\mathbb{R}^d} \{x\cdot w  - \log M_\omega(w)\},\: x\in\mathbb{R}^d.
\end{align}
\begin{theorem}[Multidimensional Cram\'{e}r's Theorem~\cite{dembo2009large}]
  Suppose $X(1),\ldots,X(k),\ldots $ be a sequence of i.i.d. random vectors and $X(k)\in\mathbb{R}^d$ has the probability measure $\omega$.
  Let $\overline{X}(k) \triangleq \sum_{t=1}^kX(t)/k, k\in\mathbb{Z}_{+}$ be the empirical mean.
  Then if $0\in\inte(\dom_{\omega})$,
  the probability $\mathbb{P}(\overline{X}(k)\in \mathcal{A})$ with $\mathcal{A}\subseteq\mathbb{R}^d$ satisfies the large deviation principle, i.e.,
\begin{enumerate}
\item if $\mathcal{A}$ is closed,
  \begin{align*}
    \limsup_{k\to\infty} \frac{1}{k} \log \mathbb{P}(\overline{X}(k) \in \mathcal{A})  \leq -\inf_{x\in\mathcal{A}} I_\omega(x).
  \end{align*}

\item if $\mathcal{A}$ is open,
  \begin{align*}
    \liminf_{k\to\infty} \frac{1}{k} \log \mathbb{P}(\overline{X}(k) \in \mathcal{A})  \geq -\inf_{x\in\mathcal{A}} I_\omega(x).
  \end{align*}
\end{enumerate}
\end{theorem}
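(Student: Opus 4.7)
The plan is to establish the upper bound (closed sets) and the lower bound (open sets) separately, following the classical exponential-tilting strategy of Cram\'er and Chernoff.

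For the upper bound I would start with a half-space: for any $w\in\mathbb{R}^d$ and $a\in\mathbb{R}$, Markov's inequality applied to the exponential moment yields
\begin{align*}
\mathbb{P}(w\cdot\overline{X}(k)\geq a) \leq e^{-ka}\,\mathbb{E}[e^{w\cdot X(1)}]^k = e^{-k(a-\log M_\omega(w))},
\end{align*}
and optimizing over $w$ produces the Fenchel--Legendre transform $I_\omega$. To pass from half-spaces to an arbitrary compact convex set, I would invoke the supporting-hyperplane theorem to cover the complement by finitely many half-spaces and apply a union bound. Extending from compact convex to general closed sets needs two further ingredients: convexity of $I_\omega$ (immediate from its definition as a supremum of affine functions) reduces matters to the closed convex hull, while the hypothesis $0\in\inte(\dom_\omega)$ guarantees that $M_\omega$ is finite in a neighborhood of the origin. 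This in turn implies exponential tightness, i.e.\ for every $M$ there is a compact $K_M$ with $\mathbb{P}(\overline{X}(k)\notin K_M)\leq e^{-kM}$, so the part of $\mathcal{A}$ outside a large box can be discarded at negligible exponential cost.

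For the lower bound it suffices, by definition of the infimum, to show that for every $x\in\mathcal{A}$ with $I_\omega(x)<\infty$ and every open ball $B(x,\delta)\subseteq\mathcal{A}$,
\begin{align*}
\liminf_{k\to\infty}\frac{1}{k}\log\mathbb{P}(\overline{X}(k)\in B(x,\delta)) \geq -I_\omega(x).
\end{align*}
For $x$ in the interior of the effective domain of $I_\omega$, I would pick $w^\ast$ attaining the supremum defining $I_\omega(x)$, which by differentiation of $\log M_\omega$ satisfies $\nabla\log M_\omega(w^\ast)=x$, and introduce the exponentially tilted measure $d\tilde{\omega}/d\omega(y) = e^{w^\ast\cdot y}/M_\omega(w^\ast)$. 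Under the corresponding product law the $X(k)$'s are i.i.d.\ with mean $x$, so by the weak law of large numbers $\tilde{\mathbb{P}}(\overline{X}(k)\in B(x,\delta))\to 1$. A change of measure then yields
\begin{align*}
\mathbb{P}(\overline{X}(k)\in B(x,\delta)) = M_\omega(w^\ast)^k\,\tilde{\mathbb{E}}\bigl[e^{-kw^\ast\cdot\overline{X}(k)}\mathbf{1}_{\{\overline{X}(k)\in B(x,\delta)\}}\bigr],
\end{align*}
and bounding $w^\ast\cdot\overline{X}(k)$ on the ball by $w^\ast\cdot x+\|w^\ast\|\delta$ produces the claimed bound up to an $O(\delta)$ error; letting $\delta\downarrow 0$ completes this case.

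The main obstacle I anticipate is twofold. On the upper-bound side, the passage from compact convex sets to arbitrary closed (possibly non-convex and unbounded) sets rests entirely on exponential tightness, so without the assumption $0\in\inte(\dom_\omega)$ the tails of $\overline{X}(k)$ can carry nontrivial mass and the bound fails; this is why the hypothesis is indispensable rather than merely convenient. On the lower-bound side, the tilting argument works cleanly only when the defining supremum of $I_\omega(x)$ is attained by a finite $w^\ast$; for boundary points of the effective domain one must approximate $x$ by interior points $x_n\to x$ and invoke lower semicontinuity of $I_\omega$ to transfer the bound. Once these two technicalities are handled, combining the results yields both inequalities and hence the full large deviation principle.
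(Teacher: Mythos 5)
The paper offers no proof of this statement: it is quoted as a preliminary from Dembo and Zeitouni~\cite{dembo2009large}, so there is no internal argument to compare yours against. Your sketch follows the standard Chernoff-bound / exponential-tilting proof of the multidimensional Cram\'er theorem, and the overall architecture --- half-spaces, then compact sets, then exponential tightness from $0\in\inte(\dom_\omega)$ for the upper bound; tilting plus the weak law of large numbers for the lower bound --- is the right one.

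There is, however, one step that would fail as written. In the upper bound you propose to handle a general closed set by passing to its closed convex hull, justified by convexity of $I_\omega$. That reduction is lossy: for a non-convex set the infimum of $I_\omega$ over the convex hull can be strictly smaller than over the set itself (take $\mathcal{A}=\{-1,1\}\subset\mathbb{R}$ with $I_\omega(x)=x^2$; the infimum drops from $1$ to $0$). You would therefore prove the upper bound only with $\inf_{x\in\overline{\mathrm{co}}(\mathcal{A})}I_\omega(x)$, which is weaker than the theorem and in particular too weak for how the paper uses it: the closed sets to which Cram\'er's theorem is applied in Appendix~C, such as $\R^m\setminus\bal(\M,0,z_s)$, are not convex. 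The standard repair is the pointwise covering argument: for each $q$ in the compact part of $\mathcal{A}$ choose $w_q$ nearly attaining the supremum defining $I_\omega(q)$ and a small ball around $q$ on which the Chernoff bound with that single $w_q$ is within $\epsilon$ of the target exponent, extract a finite subcover, and union-bound; exponential tightness then disposes of the unbounded part exactly as you describe. A second, smaller slip concerns boundary points of the effective domain in the lower bound: you invoke lower semicontinuity of $I_\omega$ to transfer the bound from interior approximants $x_n\to x$, but lower semicontinuity gives $\liminf_n I_\omega(x_n)\geq I_\omega(x)$, which is the wrong direction; what is needed is $\limsup_n I_\omega(x_n)\leq I_\omega(x)$, obtained by taking $x_n$ on the segment from a fixed interior point to $x$ and using convexity of $I_\omega$. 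With these two repairs the argument goes through.
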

%
%

\section{Main Results}
\label{sec:fundamentalLimitation}

\subsection{Technical Preliminaries}

Denote the moment generating function of the log-likelihood ratio $\lambda$ under each hypothesis as:
\begin{align}
  M_0(w) &\triangleq \int_{y=-\infty}^\infty \exp(w \lambda(y)){\rm d}\nu(y),\\
  M_1(w) &\triangleq \int_{y=-\infty}^\infty \exp(w \lambda(y)){\rm d}\mu(y).
\end{align}
Furthermore, define $\dom_0$ as the region where $M_0(w)$ is finite and $I_0(x)$ as the Fenchel--Legendre transform of $\log M_0(w)$.
The quantities $M_1(w)$, $\dom_1$ and $I_1(x)$ are defined similarly.

Denote the the Kullback-Leibler (K--L) divergences by
\[
  D(1\|0)\triangleq \int_{y=-\infty}^\infty \lambda(y) {\rm d}\mu,\, D(0\|1)\triangleq -\int_{y=-\infty}^\infty \lambda(y) {\rm d}\nu.
\]
To apply the multidimensional Cram\'{e}r's Theorem and avoid degenerate problems, we adopt the following assumptions:
\begin{assumption}
  $0\in\inte(\dom_0)$ and $0\in\inte(\dom_1)$.
\label{assumpt:0interior}
\end{assumption}

\begin{assumption}
  The K--L divergences are well-defined, i.e., $0<D(1\|0),D(0\|1)<\infty$.
 \label{assumpt:finiteKLs}
\end{assumption}
With the above assumptions, we have the following properties of $I_0(x)$ and $I_1(x)$. the proof of which is provided in Appendix~\ref{Proof:ProperRateFunc}.
\begin{theorem} \label{theorem:ProperRateFunc}
  Under Assumptions~\ref{assumpt:0interior}~and~\ref{assumpt:finiteKLs}, the followings hold:

  \begin{enumerate}
    \item $I_0(x)$ ($I_1(x)$) is twice differentiable, strictly convex and strictly increasing (strictly decreasing) on $[-D(0\|1),D(1\|0)]$.
    \item The following equalities hold:
      \begin{align}
	I_1(D(1\|0)) &=0, \label{eqn:RateFuncKL00}\\
	I_0(D(1\|0)) &= D(1\|0),  \label{eqn:RateFuncKL0}  \\
	I_0(-D(0\|1)) &= 0,\label{eqn:RateFuncKL10}\\
	I_1(-D(0\|1)) &= D(0\|1).  \label{eqn:RateFuncKL1}\\
	I_0(0)&=I_1(0).\label{eqn:I00equalI10}
      \end{align}
  \end{enumerate}
\end{theorem}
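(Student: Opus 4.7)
The plan is to reduce everything to a single identity relating $M_0$ and $M_1$. Since $d\mu/d\nu = e^{\lambda}$ by definition of $\lambda$, a change of measure gives
\begin{equation*}
M_1(w) = \int e^{w\lambda(y)}\,d\mu(y) = \int e^{(w+1)\lambda(y)}\,d\nu(y) = M_0(w+1),
\end{equation*}
so $\dom_1 = \dom_0 - 1$. Substituting $v = w+1$ in the Fenchel--Legendre transform then yields the clean relation
\begin{equation*}
I_1(x) = \sup_{w}\bigl\{xw - \log M_0(w+1)\bigr\} = I_0(x) - x.
\end{equation*}
From Assumption~\ref{assumpt:0interior} together with this shift identity, both $0$ and $1$ lie in $\inte(\dom_0)$, a fact I use throughout.

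For part~2, I appeal to two standard facts. Since $(\log M_0)'(0) = \mathbb{E}^o_0[\lambda] = -D(0\|1)$ and $(\log M_0)'(1) = \mathbb{E}^o_1[\lambda] = D(1\|0)$ (using $M_0(1) = \int d\mu = 1$), the Legendre duality immediately yields $I_0(-D(0\|1)) = 0$ and $I_1(D(1\|0)) = 0$, which are \eqref{eqn:RateFuncKL10} and \eqref{eqn:RateFuncKL00}. Applying $I_1(x) = I_0(x) - x$ then gives the remaining three equalities: $I_0(D(1\|0)) = I_1(D(1\|0)) + D(1\|0) = D(1\|0)$, $I_1(-D(0\|1)) = I_0(-D(0\|1)) + D(0\|1) = D(0\|1)$, and $I_1(0) = I_0(0)$ by setting $x=0$.

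For part~1, I would invoke the standard analytic theory of log-MGFs. Under Assumption~\ref{assumpt:0interior}, $\log M_0$ is real analytic on $\inte(\dom_0)$ and strictly convex there, because $(\log M_0)''(w)$ equals the variance of $\lambda$ under the exponentially tilted measure, which is positive since $\lambda$ is non-constant under $\nu$ (a consequence of $D(0\|1)>0$). Its derivative is therefore a smooth, strictly increasing bijection onto its image, so $I_0$ is $C^2$ and strictly convex on that image. Because $0,1 \in \inte(\dom_0)$, the endpoints $-D(0\|1)$ and $D(1\|0)$ lie strictly inside the image of $(\log M_0)'$, and hence the regularity extends to the entire closed interval $[-D(0\|1), D(1\|0)]$. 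Strict monotonicity of $I_0$ on this interval follows from strict convexity combined with $I_0'(-D(0\|1)) = 0$; for $I_1$, differentiating $I_1 = I_0 - x$ shifts the minimum to $D(1\|0)$ and flips the direction of monotonicity. The main obstacle is this regularity-at-the-endpoints issue, since textbook statements typically yield smoothness only on the open interior of the effective domain; the shift identity $\dom_1 = \dom_0 - 1$ together with Assumption~\ref{assumpt:0interior} is precisely what is needed to push the conclusion through to the closed interval stated in the theorem.
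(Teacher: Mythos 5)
Your proof is correct and follows essentially the same route as the paper: the identity $M_1(w)=M_0(w+1)$, the derivative values $(\log M_0)'(0)=-D(0\|1)$ and $(\log M_0)'(1)=D(1\|0)$, strict convexity of $\log M_0$ via the positive variance of $\lambda$ under the tilted measure, and then Legendre duality to get the explicit parametrization of $I_0$ and its derivatives. The one refinement you add is the explicit relation $I_1(x)=I_0(x)-x$, which lets you transfer all the properties and the remaining three equalities from $I_0$ to $I_1$ in one step, where the paper instead repeats the analogous computation with $M_1$ ``similarly.''
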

Since $I_0(0) = I_1(0)$, let us define
\begin{align} \label{eqn:DefC}
  C\triangleq I_0(0).
\end{align}
To make the presentation clear, we illustrate $I_0(x)$ and $I_1(x)$ in Fig.~\ref{Fig:RateFunc}.

The ``inverse functions'' of $I_0(x)$ and $I_1(x)$ are defined as follows: for $z\geq 0$,
\begin{align*}
  I^{-1}_0(z) = &  \max\{x\in\mathbb{R}:I_0(x)=z\},\\
  I^{-1}_1(z) = &  \min\{x\in\mathbb{R}:I_1(x)=z\}.
\end{align*}
Let  $D_{\min}\triangleq \min\{D(0\|1),D(1\|0)\}$.
We further define $h(z):\left(0,(m-n)D_{\min}\right) \mapsto \left(0,(m-n)D_{\min}\right)$  as
\begin{align*}
  &h(z)\\
  \triangleq&
  (m-n)\min\{I_0(I^{-1}_1(z/(m-n)), I_1(I^{-1}_0(z/(m-n))\}. \addtag
  \label{eqn:definitionhz}
\end{align*}

\begin{center}
\begin{figure}
\center
\input{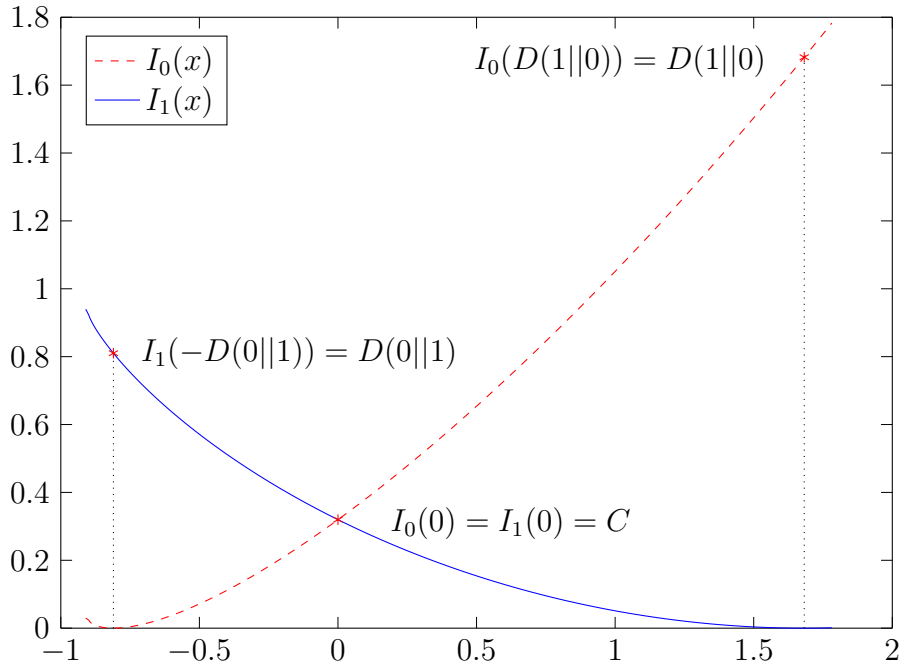}

\caption{Illustration of $I_0(x)$ and $I_1(x)$. The figure is plotted by assuming $y_1(1)$ to be Bernoulli distributed under both hypotheses with $\mathbb P_0^{o}(y_1(1)=1)=0.02$ and $\mathbb P_1^{o}(y_1(1)=1)=0.6$.}
\label{Fig:RateFunc}
\end{figure}
\end{center}

\subsection{Fundamental Limits}

We are ready to provide the fundamental limitations between efficiency and security.
The proof is provided in Appendix~\ref{Proof:FundamentalLimit}.
\begin{theorem} \label{theorem:PerformanceLimitation}
  For any detection rule $f$, the following statements on $\E(f)$ and $\bS(f)$ are true:
  \begin{enumerate}
  \item $\E(f)\leq mC$,
  \item $\bS(f)\leq(m-2n)^{+}C$, where $(m-2n)^+ = \max\{0,m-2n\}$.
  \item $\bS(f)\leq \E(f)$,
  \item Let $\E(f)=z$, we have
    \begin{subnumcases}{\bS(f)\leq}
      h(z) &if  $0<z<(m-n)D_{\min}$ \label{E-S01}
      \\
      0 &if  $z\geq (m-n)D_{\min}$. \label{E-S02}
    \end{subnumcases}
  \end{enumerate}
\end{theorem}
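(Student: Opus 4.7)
The plan is to prove the four bounds in the order (1), (3), (2), (4), since they build in complexity. The unifying tool is the multidimensional Cram\'er theorem applied to the empirical log-likelihood vector $\bar{\boldsymbol{\lambda}}(k) = (\bar\lambda_1(k),\dots,\bar\lambda_m(k))\in\R^m$, whose rate functions under $\mathbb P_0^o$ and $\mathbb P_1^o$ are $\sum_i I_0(x_i)$ and $\sum_i I_1(x_i)$ respectively, together with the identity $I_0(x)-I_1(x)=x$, which is immediate from $d\mu/d\nu=e^{\lambda}$ and the definition of the Fenchel--Legendre transform.

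Claim (1) follows from the classical Chernoff bound applied to the $mk$ i.i.d.\ samples available in the no-attack case, giving an error exponent at most $mC$. Claim (3) is a tautology: the zero attack is admissible, so $\bS(f)=\inf_g\rho(f,g)\leq\rho(f,\mathbf 0)=\E(f)$. For claim (2), I would pick two $n$-subsets $\I_0,\I_1\subseteq\M$ with $|\I_0\cup\I_1|=\min(2n,m)$ and construct the attack that, under $\theta=0$, compromises $\I_0$ and emits i.i.d.\ $\mu$, and under $\theta=1$, compromises $\I_1$ and emits i.i.d.\ $\nu$. When $m\geq 2n$ the $\I_j$'s are disjoint and the two received distributions differ only on $\M\setminus(\I_0\cup\I_1)$, a set of $m-2n$ benign sensors, so the Chernoff lower bound caps the rate at $(m-2n)C$. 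When $m<2n$ an appropriate choice of emitted distributions on $\I_0\cap\I_1$ (setting them to $\mu$ on $\I_0\setminus\I_1$ and to $\nu$ on $\I_1\setminus\I_0$, with a common arbitrary distribution on $\I_0\cap\I_1$) makes the two received distributions coincide, forcing $\epsilon(k)\geq 1/2$ and hence $\bS(f)\leq 0=(m-2n)^+C$.

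The bulk of the work is claim (4). For the first term of $h(z)$, I would analyze the attack $g_1$ that under $\theta=1$ forces the sensors in $\I$ to emit i.i.d.\ $\nu$. A Radon--Nikodym change of measure to $\mathbb P_0^o=\nu^{\otimes mk}$ gives
\[
  e(1,\I,k)=\mathbb E_{\nu^{mk}}\!\Bigl[\mathbf 1_{A_k^c}\exp\Bigl(k\textstyle\sum_{i\notin\I}\bar\lambda_i(k)\Bigr)\Bigr],
\]
where $A_k=\{f_k=1\}$. Restricting the expectation to a fixed cylinder $B=\{|\bar\lambda_i-c|<\delta\text{ for }i\notin\I,\;|\bar\lambda_i-d|<\delta\text{ for }i\in\I\}$ yields a lower bound
\[
  e(1,\I,k)\gtrsim e^{k(m-n)(c-\delta)}\bigl[\mathbb P_{\nu^{mk}}(B)-\mathbb P_{\nu^{mk}}(A_k)\bigr]_+.
\]
By Cram\'er's theorem, $\mathbb P_{\nu^{mk}}(B)$ decays at rate $(m-n)I_0(c)+nI_0(d)$, while the hypothesis $\E(f)=z$ gives $\mathbb P_{\nu^{mk}}(A_k)\lesssim e^{-kz}$. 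Imposing the strict inequality $(m-n)I_0(c)+nI_0(d)<z$, the first term dominates and, after applying $I_0(c)-c=I_1(c)$, the error exponent is at most $(m-n)I_1(c)+nI_0(d)+O(\delta)$. The attacker optimizes by setting $d=-D(0\|1)$ (so $I_0(d)=0$) and pushing the constraint to equality with $c=I_0^{-1}(z/(m-n))$, yielding $\bS(f)\leq(m-n)I_1(I_0^{-1}(z/(m-n)))$. A symmetric analysis of the attack that under $\theta=0$ emits i.i.d.\ $\mu$ on $\I$ delivers the second term of $h(z)$, and taking the minimum gives the stated bound. When $z\geq(m-n)D_{\min}$, the same optimisation drives $c$ to one of the endpoints $D(1\|0)$ or $-D(0\|1)$, at which $I_1$ or $I_0$ vanishes, producing $\bS(f)\leq 0$.

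The hard part will be the rigorous justification of the LDP on the $k$-dependent sets: Cram\'er's theorem is stated for fixed sets, but $A_k$ is not fixed. The standard workaround is to keep $B$ fixed (parameterised by $c,d,\delta$ only), apply the LDP lower bound to $B$, control $\mathbb P_{\nu^{mk}}(A_k)$ exclusively through the efficiency hypothesis, and finally send $\delta\downarrow 0$. The strictness of the constraint $(m-n)I_0(c)+nI_0(d)<z$ absorbs the $o(k)$ corrections from the large-deviation asymptotics and from the boundary of $B$; continuity of $I_0,I_1$ on the interior of their domains (Theorem~\ref{theorem:ProperRateFunc}) then gives the claimed bound in the limit.
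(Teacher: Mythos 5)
Your proposal is correct. Items (1)--(3) follow the paper's own argument essentially verbatim: the optimality of the centralized likelihood-ratio test plus Cram\'er's theorem for the $mC$ bound, the ``flip $n$ sensors under $\theta=0$ and a disjoint $n$ under $\theta=1$'' attack for the $(m-2n)^{+}C$ bound (including the coincidence-of-measures degenerate case when $m<2n$), and the trivial observation for (3). Where you genuinely diverge is item (4), which is the bulk of the theorem. The paper introduces an auxiliary weighted Bayesian problem, minimizing $\mathbb E_{\nu_{*}^k} f_k + \phi^k\,\mathbb E_{\mu_{a}^k}(1-f_k)$ over detectors; its optimizer is the likelihood-ratio test on the $m-n$ uncompromised sensors, the finite-$k$ Neyman--Pearson dominance is converted into a trade-off between the two exponents $E_\phi=(m-n)I_0(\tilde\phi)$ and $S_\phi=(m-n)I_1(\tilde\phi)$, and sweeping the threshold $\phi$ traces out $h$. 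You instead prove the same inequality by a direct exponential change of measure from the attacked law to $\nu^{\otimes mk}$, an LDP lower bound on a fixed product box $B$, the efficiency hypothesis used only to make $\mathbb P_{\nu^{mk}}(A_k)$ negligible relative to $\mathbb P_{\nu^{mk}}(B)$, and the identity $I_1(x)=I_0(x)-x$ (immediate from $M_1(w)=M_0(w+1)$) to turn $I_0(c)-c$ into $I_1(c)$. This is the classical tilting proof of the Stein/Chernoff converse. It buys self-containedness --- no appeal to the optimality of a comparison detector, and it sidesteps the slightly delicate passage from per-$k$ dominance to exponent dominance that the paper glosses over --- at the cost of the $\delta$- and $\epsilon$-bookkeeping you correctly identify as the technical crux; your plan of keeping $B$ fixed, exploiting the strict inequality $(m-n)I_0(c)+nI_0(d)<z$, and then letting $c\uparrow I_0^{-1}(z/(m-n))$, $d=-D(0\|1)$ by continuity and monotonicity of $I_0,I_1$ (Theorem~\ref{theorem:ProperRateFunc}) is sound, and the same limit with $c\to D(1\|0)$ or $c\to -D(0\|1)$ correctly yields the degenerate case $z\ge(m-n)D_{\min}$.
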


\begin{remark}
  Theorem~\ref{theorem:PerformanceLimitation} indicates that $mC$ is the maximum efficiency that can be achieved by any detector, while $(m-2n)^+C$ is the maximum security that can be achieved. Therefore, if $m \leq 2n$, i.e., no less that half of the sensors are compromised, then $\bS(f) = 0$ for any $f$, which implies that all detectors will have zero security. In that case, the naive Bayes detector will be the optimal choice since it has the optimal efficiency and the analysis becomes trivial. Therefore, without any further notice, we assume $m>2n$ for the rest of the paper.

  Notice that fourth constraint in Theorem~\ref{theorem:PerformanceLimitation} indicates a trade-off between security and efficiency. For general cases, the maximum security and efficiency may not be achieved simultaneously. However, in Section~\ref{sec:symmetric}, we will prove that for a special case, there exist detectors that can achieve the maximum security and efficiency at the same time.
\end{remark}

Notice that $I_0(x)$ ($I_1(x)$) is strictly increasing (decreasing) on $[-D(0\|1),D(1\|0)]$.
Therefore, combining~\eqref{eqn:RateFuncKL0}~and~\eqref{eqn:RateFuncKL1}, one obtains that $h(z)$ is strictly decreasing.
Then the dual version of~\eqref{E-S01} is obtained as follows.
Let $\bS(f)=z$ we have that  if $0<z\leq(m-2n)C$,
\begin{align} \label{S-E01}
  \E(f)\leq h^{-1}(z) = h(z),
\end{align}
where $h^{-1}(z)$ is the inverse function of $h(z)$, and the equality holds because $h(z)$ is an involutory function, i.e., $h(h(z))=z$ for every $z\in(0,(m-n)D_{\min})$.

We then have the following two corollaries.
The results follow straightforwardly from Theorem~\ref{theorem:PerformanceLimitation} and~\eqref{S-E01}, we thus omit the proofs.
\begin{corollary} \label{corollary:S-E}
  Suppose the security of a detector $f$ satisfies
\begin{align*}
  \bS(f) = z \in [0, (m-2n)C],
\end{align*}
then the maximum efficiency of $f$ satisfies the following inequality:
  \begin{align*}
    \max_{ f\in\{ f:\mathpzc{S}(f)=z \} } \mathpzc{E}(f) \leq\begin{cases}
      mC & \text{if } z=0
      \\
      h_e(z) & \text{if }  z > 0
    \end{cases},
  \end{align*}
  where
  $h_e(z)\triangleq\min\{ mC, h(z)\}$.
\end{corollary}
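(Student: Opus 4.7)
The plan is to deduce the corollary as an immediate consequence of Theorem~\ref{theorem:PerformanceLimitation} together with the dual inequality~\eqref{S-E01}, by splitting on whether $z=0$ or $z>0$.

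First I would handle the trivial case $z=0$: here the only statement we need is part~1 of Theorem~\ref{theorem:PerformanceLimitation}, which gives $\E(f)\leq mC$ unconditionally, for any admissible detector $f$. Taking the supremum over $\{f:\bS(f)=0\}$ preserves this bound and yields the first branch of the piecewise bound.

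Next I would handle the case $z\in(0,(m-2n)C]$. Two upper bounds on $\E(f)$ are available simultaneously. The first, $\E(f)\leq mC$, again comes from part~1 of Theorem~\ref{theorem:PerformanceLimitation}. The second is the dual bound~\eqref{S-E01}, which was derived from part~4 of the theorem using the strict monotonicity of $I_0$ and $I_1$ on $[-D(0\|1),D(1\|0)]$ together with the involutory identity $h(h(z))=z$ established in the paragraph preceding the corollary; it states that $\bS(f)=z>0$ forces $\E(f)\leq h^{-1}(z)=h(z)$. Combining the two inequalities gives $\E(f)\leq\min\{mC,h(z)\}=h_e(z)$.

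The only subtlety worth mentioning is ensuring~\eqref{S-E01} really applies on the full range $z\in(0,(m-2n)C]$; this is guaranteed because $(m-2n)C<(m-n)D_{\min}$ (since $C=I_0(0)<D_{\min}$ by strict monotonicity of $I_0$ on $[-D(0\|1),D(1\|0)]$), which places $z$ inside the domain of $h$. Taking the supremum over $\{f:\bS(f)=z\}$ of the pointwise bound $\E(f)\leq h_e(z)$ then gives the stated conclusion. There is no genuine obstacle here: both ingredients were already proven, and the argument is purely a case split and a minimum.
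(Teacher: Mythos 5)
Your proposal is correct and follows exactly the route the paper intends: the paper omits the proof, stating only that the corollary follows from Theorem~\ref{theorem:PerformanceLimitation} and~\eqref{S-E01}, which is precisely your case split ($z=0$ via part~1; $z>0$ via part~1 combined with the dual bound). Your check that $(m-2n)C<(m-n)D_{\min}$ so that $z$ lies in the domain of $h$ is a worthwhile detail the paper glosses over (note it needs monotonicity of both $I_0$ and $I_1$, via $C=I_0(0)<D(1\|0)$ and $C=I_1(0)<D(0\|1)$).
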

\begin{corollary}  \label{corollary:E-S}
  Suppose the efficiency of a detector $f$ satisfies
  \begin{align*}
    \mathpzc{E}(f) = z \in [0, mC],
  \end{align*}
then the maximum security of $f$ satisfies the following inequality:
  \begin{align*}
    \max_{f\in\{f:\mathpzc{E}(f)=z\}} \mathpzc{S}(f)
    \leq\begin{cases}
      h_s(z)& \text{if }  0<z < z', \\
      0& \text{if } z \geq z'\text{ or } z=0,
    \end{cases}
  \end{align*}
  where $h_s(z) = \min\{z,(m-2n)C,h(z)\}$, and $z' = (m-n)D_{\min}$.
\end{corollary}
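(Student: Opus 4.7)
The plan is to derive Corollary~\ref{corollary:E-S} directly from the four items of Theorem~\ref{theorem:PerformanceLimitation} by fixing $\E(f) = z$ and taking the simultaneous minimum of all applicable upper bounds on $\bS(f)$. No new large-deviation analysis is required; the corollary is essentially a consolidation of the bounds already proved.

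First I would fix an arbitrary detector $f$ with $\E(f) = z \in [0, mC]$ (the range being forced by item (1)) and collect every bound on $\bS(f)$ that Theorem~\ref{theorem:PerformanceLimitation} gives under this constraint. From item (3), $\bS(f) \leq \E(f) = z$. From item (2), independently of $z$, $\bS(f) \leq (m-2n)^+ C = (m-2n)C$, since the standing assumption $m > 2n$ removes the positive part. From item (4), I get two cases: $\bS(f) \leq h(z)$ whenever $0 < z < (m-n)D_{\min} = z'$, and $\bS(f) \leq 0$ whenever $z \geq z'$.

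Next, I would assemble these bounds into the piecewise form claimed. For $0 < z < z'$, all three bounds $z$, $(m-2n)C$, and $h(z)$ apply simultaneously, so $\bS(f) \leq \min\{z,(m-2n)C,h(z)\} = h_s(z)$. For $z \geq z'$, item (4) gives the strongest bound $\bS(f) \leq 0$. For the degenerate endpoint $z = 0$, item (3) alone forces $\bS(f) \leq 0$ (and $\bS(f) \geq 0$ by its definition as a rate). Since $f$ was arbitrary in the class $\{f : \E(f) = z\}$, the same bounds hold for the supremum over this class.

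There is no substantive obstacle here, since the four statements of Theorem~\ref{theorem:PerformanceLimitation} were already proved in the preceding section; the only mild care needed is to keep track of the three regimes $(z=0$, $0<z<z'$, $z\geq z')$ and to invoke $m > 2n$ to drop the positive part in $(m-2n)^+ C$. For this reason the paper (as stated in the excerpt) omits the detailed proof.
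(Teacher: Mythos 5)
Your proposal is correct and matches what the paper intends: the authors explicitly omit the proof, stating that the corollary "follows straightforwardly from Theorem~\ref{theorem:PerformanceLimitation}," and your consolidation of items (2)--(4) (with item (3) handling the $z=0$ endpoint and the standing assumption $m>2n$ removing the positive part) is exactly that argument. No gaps.
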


%

\subsection{Achievability}    \label{sec:subAchieve}
In this section, we propose a detector that achieves the upper bounds in Corollaries~\ref{corollary:S-E}~and~\ref{corollary:E-S}.

Let $z_s\leq(m-2n)C$.
At time $k\geq 1$, the algorithm, denoted by $f_{z_s}^*$, is implemented as follows.
\begin{algorithm}
1:\: Compute the empirical mean of the likelihood ratio from time $1$ to time $k$ for each sensor $i$:
  \begin{align*}
    \bar\lambda_i(k) &\triangleq \sum_{t=1}^k \lambda(y_i'(t))/k \\
                 &= \frac{k-1}{k}\bar\lambda_i(k-1) +  \frac{1}{k}\lambda(y_i'(k)) \addtag \label{eqn:barlambda}
  \end{align*}
  with $\bar\lambda_i(0) = 0$.\\
2:\: Compute $I_0(\bar\lambda_i(k))$ and $I_1(\bar\lambda_i(k))$ for each $i$. Compute the following sum:
  \begin{align*}
    \delta(0,k) &\triangleq \min_{|\mathcal O|=m-n, \mathcal O \subset \M}\sum_{i\in\mathcal O} I_0(\bar\lambda_i(k)), \\
    \delta(1,k) &\triangleq \min_{|\mathcal O|=m-n, \mathcal O \subset \M}\sum_{i\in\mathcal O} I_1(\bar\lambda_i(k)).
  \end{align*} \\
3:\: If $\delta(0,k)<z_s$, make a decision $\hat\theta=0$; go to the next step otherwise. \\
4:\: If $\delta(1,k)<z_s$, make a decision $\hat\theta=1$; go to the next step otherwise. \\
5:\: If $\sum_{i=1}^{m} \bar\lambda_i(k) <0$, make a decision $\hat\theta=0$; make a decision $\hat\theta=1$ otherwise.\\
\caption{Hypothesis testing algorithm $f^*_{z_s}$}
\label{alg:fstar}
\end{algorithm}
\begin{remark}
  We here discuss about the computational complexity of the detection rule $f_{z_s}^*$.
  The computational complexity for the  step~1 is $O(m)$.
  Notice that the quantity $\bar\lambda_i(k)$ is computed in a recursive fashion.
  The complexity for the step~2 is $O(m\log m)$.
  To compute $\delta(0,k)$ and $\delta(1,k)$, we can first sort $I_0(\bar\lambda_i(k))$ and $I_1(\bar\lambda_i(k))$ in ascending order, respectively, and then sum the first $m-2n$ elements of each.
  The computational complexity for the step~3 and  step~4 is fixed, and the step~5 has computational complexity $O(m)$.
  Therefore, the total computational complexity for each time step is $O(m\log m)$.
\end{remark}

We now show the performance of $f_{z_s}^*$ and the proof is provided in Appendix~\ref{sec:proofAchieve}.

\begin{definition}
  $(z_e,z_s)$ are called an admissible pair if the following inequalities holds:
  \begin{align*}
   0\leq  &z_s \leq (m-2n)C, \\
    &z_e \leq\begin{cases}
      mC & \text{if } z_s=0
      \\
      h_e(z_s) & \text{if }  z_s > 0
    \end{cases},
  \end{align*}
  where $h_e(z_s)$ is defined in Corollary~\ref{corollary:S-E}.
\end{definition}

\begin{theorem} \label{theorem:achieve}
  Let $(z_e,z_s)$ be any admissible pair of efficiency and security.
  Then there holds
  \[\E(f_{z_s}^*)\geq z_e, \quad \bS(f_{z_s}^*) \geq z_s.\]
\end{theorem}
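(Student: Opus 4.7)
The plan is to bound the worst-case probability of error above by a sum of a small number of large-deviation events, and then to apply the multidimensional Cramér's theorem together with the structural properties of $I_0,I_1$ listed in Theorem~\ref{theorem:ProperRateFunc} to identify the exponential decay rate of each event. We handle $\mathpzc{S}(f^*_{z_s})\ge z_s$ first because it is the cleaner argument, and then bootstrap the same ideas to handle $\mathpzc{E}(f^*_{z_s})\ge z_e$.

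\emph{Security.} Fix any compromised set $\mathcal{I}$ of cardinality $n$ and any admissible attack strategy $g$. The decision rule outputs $\hat\theta=0$ whenever $\delta(0,k)<z_s$, so under $\theta=0$ an error requires $\delta(0,k)\ge z_s$. Since $\M\setminus\mathcal{I}$ is one of the $(m-n)$-subsets over which the minimum $\delta(0,k)$ is taken,
\[
\delta(0,k)\ \le\ \sum_{i\in\M\setminus\mathcal{I}} I_0(\bar\lambda_i(k)).
\]
Crucially, for $i\in\M\setminus\mathcal{I}$ the empirical means $\bar\lambda_i(k)$ are functionals of the \emph{true} measurements only and, under $\theta=0$, the random variables $(\lambda(y_i(t)))_{i\in\M\setminus\mathcal{I},\,t\ge 1}$ are i.i.d.\ across both indices. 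Viewing $(\lambda(y_i(t)))_{i\in\M\setminus\mathcal{I}}\in\R^{m-n}$ as an i.i.d.\ random vector with independent coordinates (whose individual rate function is $I_0$), the multidimensional Cram\'er theorem gives
\[
\limsup_{k\to\infty}\frac{1}{k}\log\mathbb{P}_0\Bigl(\sum_{i\in\M\setminus\mathcal{I}} I_0(\bar\lambda_i(k))\ge z_s\Bigr)\ \le\ -\inf_{\mathbf x\in\R^{m-n}:\sum_i I_0(x_i)\ge z_s}\sum_i I_0(x_i)\ =\ -z_s.
\]
The same argument under $\theta=1$ with $I_1$ replacing $I_0$ gives the symmetric bound. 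Taking the worst case over $\theta$ and $\mathcal{I}$ yields $\mathpzc{S}(f^*_{z_s})\ge z_s$.

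\emph{Efficiency.} When $g=\mathbf 0$ all sensors are benign. Under $\theta=0$ the decision rule outputs $\hat\theta=1$ only if Step~3 was \emph{not} triggered and (Step~4 triggered OR Step~5 outputs $1$), hence
\[
\mathbb{P}_0^o(\hat\theta=1)\ \le\ \mathbb{P}_0^o(\delta(1,k)<z_s)\ +\ \mathbb{P}_0^o\Bigl(\sum_{i=1}^m\bar\lambda_i(k)\ge 0\Bigr).
\]
The second term is classical Cram\'er applied to $\bar\lambda_1+\cdots+\bar\lambda_m$, giving decay rate $mI_0(0)=mC$. For the first term, union-bound over the $\binom{m}{m-n}$ subsets $\mathcal{O}$ and apply multidim Cram\'er to the vector $(\bar\lambda_i(k))_{i\in\mathcal{O}}$:
\[
\limsup\frac{1}{k}\log\mathbb{P}_0^o\Bigl(\sum_{i\in\mathcal{O}} I_1(\bar\lambda_i(k))<z_s\Bigr)\ \le\ -\inf_{\mathbf x:\sum_i I_1(x_i)\le z_s}\sum_i I_0(x_i).
\]
By the strict convexity and symmetry the infimum is achieved at $x_i\equiv x^*$ with $I_1(x^*)=z_s/(m-n)$; since $I_0$ is strictly increasing on $[-D(0\|1),D(1\|0)]$, the minimising root is $x^*=I_1^{-1}(z_s/(m-n))$, yielding rate $(m-n)I_0(I_1^{-1}(z_s/(m-n)))$. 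Doing the analogous computation under $\theta=1$ produces $(m-n)I_1(I_0^{-1}(z_s/(m-n)))$. Combining and recalling the definition~\eqref{eqn:definitionhz} of $h(\cdot)$,
\[
\mathpzc{E}(f^*_{z_s})\ \ge\ \min\bigl\{mC,\ h(z_s)\bigr\}\ =\ h_e(z_s)\ \ge\ z_e,
\]
the last inequality by admissibility of $(z_e,z_s)$.

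\emph{Main obstacle.} The routine ingredient is Cram\'er's theorem; the delicate part is the variational problem $\inf\{\sum I_0(x_i):\sum I_1(x_i)\le z_s/(m-n)\cdot(m-n)\}$ and its dual. One must verify rigorously that the infimum is attained on the diagonal $x_1=\cdots=x_{m-n}$ and that the minimising root is the \emph{lower} one (matching the convention in the definition of $I_1^{-1}$), which is where Theorem~\ref{theorem:ProperRateFunc}'s monotonicity on $[-D(0\|1),D(1\|0)]$ is indispensable. A secondary subtlety is matching the $\liminf$ in the definition of $\rho$ with the $\limsup$ from Cram\'er's upper bound, which is immediate because we only need an upper bound on the error probability.
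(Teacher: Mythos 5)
Your efficiency argument and the $\theta=0$ half of your security argument are sound, and they carry the same content as the paper's proof: the paper packages the diagonal/convexity reduction of the constrained infimum into Lemma~\ref{lemma:miniDimensionReduce} and a family of set inclusions (Lemmas~\ref{lemma:intersect} and~\ref{lemma:finalsubset}), but the computations $(m-n)I_0(I_1^{-1}(z_s/(m-n)))$, $mC$, etc.\ are identical. The gap is the sentence ``The same argument under $\theta=1$ with $I_1$ replacing $I_0$ gives the symmetric bound.'' The detector is not symmetric in $\theta$: Step~3 is evaluated before Step~4, so under $\theta=1$ the error event $\{\hat\theta=0\}$ is \emph{not} contained in $\{\delta(1,k)\ge z_s\}$; it is contained in $\{\delta(1,k)\ge z_s\}\cup\{\delta(0,k)<z_s\}$, and the second event is the dangerous one. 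An attacker holding the $n$ sensors in $\mathcal I$ can set $\bar\lambda_i(k)=-D(0\|1)$ for $i\in\mathcal I$, making $I_0(\bar\lambda_i(k))=0$ there; then for a subset $\mathcal O$ containing $\mathcal I$, the event $\sum_{i\in\mathcal O}I_0(\bar\lambda_i(k))<z_s$ reduces to an atypical fluctuation of only the $m-2n$ benign sensors in $\mathcal O$, whose large-deviation rate under $\mu$ is $(m-2n)I_1\bigl(I_0^{-1}(z_s/(m-2n))\bigr)$. One must verify that this rate is still at least $z_s$, which holds precisely because $z_s\le(m-2n)C$ and fails for larger $z_s$; this is exactly the paper's Lemma~\ref{lemma:miniDimensionReduce} applied with overlap $|\mathcal O_1\cap\mathcal O_2|=m-2n$, feeding into item 3) of Lemma~\ref{lemma:intersect} (disjointness of $\ebal(0,z_s,n)$ and $\ebal(1,z_s,n)$). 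Moreover, when $z_s=(m-2n)C$ this cross event has rate exactly $z_s$, so it is not asymptotically negligible and cannot be waved away.

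A quick sanity check confirms something must be missing: your security argument never invokes the hypothesis $z_s\le(m-2n)C$, so as written it would yield $\bS(f^*_{z_s})\ge z_s$ for arbitrarily large $z_s$, contradicting the converse bound $\bS(f)\le(m-2n)^{+}C$ of Theorem~\ref{theorem:PerformanceLimitation}. To repair the proof, add to the $\theta=1$ case a union bound over the $(m-n)$-subsets $\mathcal O$ for the event $\{\delta(0,k)<z_s\}$, discard the compromised coordinates using $I_0\ge 0$, and apply Cram\'er's theorem to the remaining at least $m-2n$ benign coordinates together with the monotonicity facts $h_1(u)=I_1(I_0^{-1}(u))$ decreasing and $h_1(C)=C$ to conclude the rate is at least $z_s$. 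The rest of your proposal stands.
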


The above theorem means that the upper bounds in Corollaries~\ref{corollary:S-E}~and~\ref{corollary:E-S} are achieved by $f_{z_s}^*$.
Hence, we provide a tight characterization on admissible efficiency and security pair. We illustrate the shape of admissible region in Fig~\ref{Fig:FundaLim}.

\begin{figure}
\center

\input{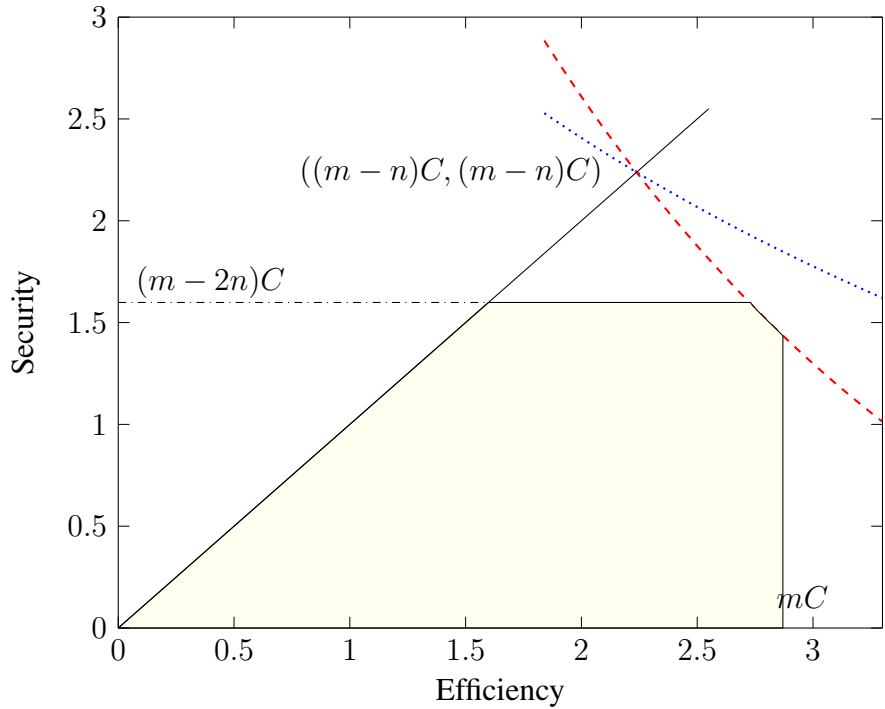}

\caption{Achievable efficiency and security region for any detector. The figure is plotted by assuming $y_1(1)$ to be Bernoulli distributed under both hypotheses with $\mathbb P_0^{o}(y_1(1)=1)=0.02$ and $\mathbb P_1^{o}(y_1(1)=1)=0.6$. The shaded area is the admissible pair $(\E(f),\bS(f))$ for any detector $f$. The red dashed line is the function $(m-n)I_0(I_1^{-1}(z/(m-n)))$, while the blue dotted line the function $(m-n)I_1(I_0^{-1}(z/(m-n)))$.}
\label{Fig:FundaLim}
\end{figure}


\begin{remark}
  The optimal detector may not be necessarily unique, in the sense that there may exist other detectors, other than the one defined by Algorithm~1, that can achieve the same efficiency and security limits.
  By definition, the detectors achieving the limits have the same asymptotic performance.
  However, the finite-time performance (in terms of detection error) may be different and we are planning to investigate this in the future work.
\end{remark}

\subsection{Special Case: Symmetric Distribution}

\label{sec:symmetric}

In this subsection, we discuss a case where the maximum security and efficiency can be achieved simultaneously by a detector.

Notice that by the definition of admissible pair, if we have
\begin{align}
  h((m-2n)C) \geq mC,
  \label{eq:hvsmC}
\end{align}
then we know that $(z_e = mC, z_s = (m-2n)C)$ is an admissible pair and hence the detector $f^*_{(m-2n)C}$ defined in Section~\ref{sec:subAchieve} can achieve maximum security $(m-2n)C$ and efficiency $mC$ simultaneously. In other words, adding security will not deteriorate the performance of the system in the absence of the adversary.

The following theorem provides a sufficient condition for \eqref{eq:hvsmC}, which is based on the first order derivative of $I_0(\cdot)$ and $I_1(\cdot)$.
The proof is presented in Appendix~\ref{Proof:SEsimultaneously} for the sake of legibility.
\begin{theorem}  \label{theorem:SEsimultaneously}
  If $I_0^{(1)}(x)|_{x=0} = - I_1^{(1)}(x)|_{x=0}$, then $h((m-2n)C) \geq mC$ holds. Therefore, $f^*_{(m-2n)C}$ possesses not only the maximal security but also the maximal efficiency.
\end{theorem}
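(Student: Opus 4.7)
The plan is to reduce the claim to a single convexity argument about the sum $\phi(x) \triangleq I_0(x) + I_1(x)$. First I would unpack what $h((m-2n)C) \geq mC$ requires. Writing $\alpha = (m-2n)/(m-n)$ and noting $m/(m-n) = 2 - \alpha$, the inequality $h((m-2n)C) \geq mC$ is equivalent to the two conditions
\begin{align*}
I_0\bigl(I_1^{-1}(\alpha C)\bigr) &\geq (2-\alpha)\, C,\\
I_1\bigl(I_0^{-1}(\alpha C)\bigr) &\geq (2-\alpha)\, C.
\end{align*}
By Theorem~\ref{theorem:ProperRateFunc}, $I_0$ is strictly increasing and $I_1$ is strictly decreasing on $[-D(0\|1),D(1\|0)]$ with $I_0(0)=I_1(0)=C$, so the points $x_1 \triangleq I_1^{-1}(\alpha C) \in (0, D(1\|0))$ and $x_0 \triangleq I_0^{-1}(\alpha C) \in (-D(0\|1),0)$ lie in the common domain and the inverses are well defined.

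Next I would study $\phi(x) = I_0(x) + I_1(x)$. As a sum of two twice-differentiable strictly convex functions (again Theorem~\ref{theorem:ProperRateFunc}), $\phi$ is itself strictly convex on $[-D(0\|1),D(1\|0)]$. The hypothesis $I_0^{(1)}(0) = -I_1^{(1)}(0)$ gives $\phi'(0)=0$, so $x=0$ is the unique global minimizer of $\phi$ on its domain, and $\phi(0)=2C$. Therefore
\begin{equation*}
I_0(x) + I_1(x) \geq 2C \quad \text{for every } x \in [-D(0\|1),D(1\|0)].
\end{equation*}
Applying this at $x=x_1$ yields $I_0(x_1) \geq 2C - I_1(x_1) = (2-\alpha)C$; applying it at $x=x_0$ yields $I_1(x_0) \geq 2C - I_0(x_0) = (2-\alpha)C$. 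These are precisely the two inequalities identified above, so $h((m-2n)C) \geq mC$.

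Finally, to conclude the theorem, I would check that $(z_e,z_s) = (mC,(m-2n)C)$ is an admissible pair in the sense of the definition preceding Theorem~\ref{theorem:achieve}: $z_s \leq (m-2n)C$ holds with equality, and the bound $z_e \leq h_e(z_s) = \min\{mC, h((m-2n)C)\}$ holds because we just proved $h((m-2n)C) \geq mC$. Invoking Theorem~\ref{theorem:achieve} with this pair then gives $\E(f^*_{(m-2n)C}) \geq mC$ and $\bS(f^*_{(m-2n)C}) \geq (m-2n)C$, which are already the maximum possible values by Theorem~\ref{theorem:PerformanceLimitation}, finishing the proof. The only nontrivial step is recognising that the symmetric-derivative hypothesis is exactly what forces $0$ to be the minimiser of $\phi$; once that is in hand, the rest is bookkeeping.
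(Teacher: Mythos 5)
Your proof is correct, and it reaches the key inequality by a genuinely different route than the paper. The paper's Appendix E first establishes that $h_0(z)=I_0(I_1^{-1}(z))$ and $h_1(z)=I_1(I_0^{-1}(z))$ are convex --- which requires computing the second derivative of an inverse function to get convexity/concavity of $I_0^{-1}$ and $I_1^{-1}$, and then a composition rule --- and then uses the tangent line to $h_0$ and $h_1$ at $z=C$, whose slope is $-1$ by the symmetric-derivative hypothesis, to conclude $h_0(z),h_1(z)\geq 2C-z$ and hence $h(z)\geq 2(m-n)C-z$. You instead observe that both required inequalities are instances of the single pointwise bound $I_0(x)+I_1(x)\geq 2C$, which follows immediately from strict convexity of the sum $\phi=I_0+I_1$ together with $\phi'(0)=0$ and $\phi(0)=2C$. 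The two arguments are dual views of the same fact (the paper's bound $h_0(z)\geq 2C-z$ is exactly $I_0(x)+I_1(x)\geq 2C$ evaluated at $x=I_1^{-1}(z)$), but yours is more elementary: it needs only monotonicity of $I_0,I_1$ to place $x_0,x_1$ in the common domain and never needs convexity of the composed maps $h_0,h_1$ or any property of inverse functions beyond existence. One could argue the paper's version buys slightly more --- it gives the global linear lower bound $h(z)\geq 2(m-n)C-z$ for all $z$, not just at $z=(m-2n)C$ --- but for the theorem as stated your evaluation at the two specific points suffices, and your closing admissibility check against Theorem~\ref{theorem:achieve} and the upper bounds of Theorem~\ref{theorem:PerformanceLimitation} is exactly what is needed.
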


  Notice that whether or not the above sufficient condition is satisfied merely depends on the probability distribution of the original observations, which is independent of the number of the compromised sensors.

  If there exists ``symmetry'' between distribution $\mu$ and $\nu$, then the sufficient condition can be satisfied. To be specific, if there exists a constant $a$ such that for any Borel measurable set $\mathcal A$, we have
\begin{align*}
  \mu(a+\mathcal A) = \nu (a-\mathcal A),
\end{align*}
 then one can prove that
  \begin{align*}
    M_0(w) = M_0(-w),
  \end{align*}
which further implies that
  \begin{align*}
    I_0(x) = I_1(-x)\Rightarrow I_0^{(1)}(x)|_{x=0} = - I_1^{(1)}(x)|_{x=0}.
  \end{align*}

  We provide two examples of pairs of ``symmetric'' distributions as follows:
  \begin{enumerate}
  \item Each $y_i(k)$ is i.i.d. Bernoulli distributed and
    \begin{align*}
      y_i(k) = \begin{cases}
        \theta & \text{with probability }p_0\\
        1-\theta & \text{with probability }1-p_0
        \end{cases}
    \end{align*}
  \item Each $y_i(k)$ satisfies the following equation:
    \begin{align*}
      y_i(k) = a\theta + v_i(k),
    \end{align*}
    where $a \neq 0$ and $v_i(k)\sim \mathcal N(\bar v, \sigma^2)$ is i.i.d. Gaussian distributed.
  \end{enumerate}

\section{Extension}
\label{sec:extension}
In this section, we consider two extensions to the problem settings discussed in Section~\ref{sec:fundamentalLimitation}.
\subsection{Secure Sensors}
Consider that there is a subset of ``secure'' sensors that are well protected and cannot be compromised by the attacker. We would like to study the trade-off between security and efficiency when those ``secure'' sensors are deployed.

Let $m_s$ out of the total $m$ sensors  be ``secure" and the remaining $m-m_s$ sensors are ``normal" ones that can be compromised by an adversary. In this subsection, $n$ can take any value in $\{0,1,\ldots,m-m_s\}$ and does not necessarily satisfy $2n < m$.
The other settings are the same as in Section~\ref{sec:ProblemFormulation}.
Denote by $\mathpzc{E}_s(f)$ and $\mathpzc{S}_s(f)$ the efficiency and security of a detection rule $f$ under such case.

Then one obtains the following results as in Theorem~\ref{theorem:PerformanceLimitation}.
\begin{theorem}
  For any detection rule $f$, the following statements on $\E_s(f)$ and $\bS_s(f)$ are true:
  \begin{enumerate}
  \item $\E_s(f)\leq mC$,
  \item $\bS_s(f)\leq \max((m-2n)C, m_sC)$,
  \item $\bS_s(f)\leq \E_s(f)$,
  \item Let $\E_s(f)=z$, we have
  \begin{align*}
   \bS_s(f)\leq \begin{cases}
      h(z) &\text{if }0<z<(m-n)D_{\min}
      \\
      0 &\text{if }z\geq (m-n)D_{\min}
    \end{cases}.
    \end{align*}
  \end{enumerate}
\end{theorem}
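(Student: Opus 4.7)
The plan is to adapt the argument behind Theorem~\ref{theorem:PerformanceLimitation} to the secure-sensor setting. The crucial structural observation is that when $n$ of the $m-m_s$ normal sensors are compromised, there are still $m-n$ uncompromised sensors in total, namely the $m_s$ secure ones together with the $m-m_s-n$ benign normal ones. Hence the quantity $m-n$, and in particular the function $h(z)$, retain their original meaning. With this in mind, parts (1), (3) and (4) transcribe directly: part (1) follows from the zero-attack strategy combined with Cram\'er's theorem applied to all $m$ sensors, part (3) from $\bS_s(f) = \inf_g \rho(f,g) \le \rho(f,\mathbf 0) = \E_s(f)$, and part (4) from the classical ``masquerade'' attack on any fixed $\I$ of $n$ normal sensors (making the compromised measurements mimic i.i.d.\ draws from the opposite hypothesis), which reduces the asymptotic analysis to the $m-n$ uncompromised sensors and reproduces the same $h(z)$ bound.

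The new content is part (2). I will exhibit two families of attack strategies, one applicable in each regime, whose worst-case error exponents are $(m-2n)C$ and $m_sC$ respectively. \emph{Regime 1}, $2n \le m-m_s$: choose disjoint subsets $\I_0,\I_1$ of the normal sensors with $|\I_0|=|\I_1|=n$, and consider the two scenarios (A) $\theta=1$, Byzantine set $\I_0$, attacker forces $y_i'(k)\sim\nu$ i.i.d.\ on $\I_0$; and (B) $\theta=0$, Byzantine set $\I_1$, attacker forces $y_i'(k)\sim\mu$ i.i.d.\ on $\I_1$. In both scenarios sensors in $\I_0$ produce $\nu$ and sensors in $\I_1$ produce $\mu$, so the only discrimination comes from the remaining $m-2n$ sensors (the $m_s$ secure ones plus the $m-m_s-2n$ benign normal ones), giving $\bS_s(f)\le(m-2n)C$, which dominates $m_sC$ here. \emph{Regime 2}, $2n>m-m_s$: disjointness is impossible, so instead choose $\I_0,\I_1$ with $|\I_0|=|\I_1|=n$ and $\I_0\cup\I_1$ equal to the entire set of normal sensors (possible since $2n\ge m-m_s$). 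The attacker makes $y_i'(k)\sim\mu$ on $\I_1\setminus\I_0$ under $\theta=0$, makes $y_i'(k)\sim\nu$ on $\I_0\setminus\I_1$ under $\theta=1$, and uses any common distribution on $\I_0\cap\I_1$ under both hypotheses; consequently the joint marginal of the normal sensors is identical in the two scenarios, leaving only the $m_s$ secure sensors to discriminate, so $\bS_s(f)\le m_sC\ge(m-2n)^+C$. Together the two regimes yield the claimed $\max$ bound.

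The main obstacle is twofold. First, one must check that the constructed attacks are admissible under Assumptions~\ref{assumpt:sparse}--\ref{assumpt:measurement}: the Byzantine set is fixed over time and contained in the normal sensors, and the bias can be realized causally from $(\mathbf Y_\I(k),\mathbf Y^a(k-1),\I,\theta,k)$, which is straightforward because all required marginals are i.i.d.\ products of $\mu$ or $\nu$ that the attacker can simulate from its internal randomness. Second, to upgrade the Chernoff-information intuition into a true bound on $\liminf_k -\log\epsilon(k)/k$, one must invoke the multidimensional Cram\'er's theorem on the joint empirical law of the discriminating coordinates exactly as in the proof of Theorem~\ref{theorem:PerformanceLimitation}, lower-bounding $\max_\theta e(\theta,\I,k)$ by the total-variation-style bound that arises from the two indistinguishable scenarios; this calculation is routine once the scenarios above are fixed.
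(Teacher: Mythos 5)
Your proposal is correct and takes essentially the same route as the paper, which states that this theorem is ``proved in the same manner'' as Theorem~\ref{theorem:PerformanceLimitation}: parts (1), (3) and (4) transcribe verbatim (zero-attack plus Cram\'er's theorem, the trivial infimum inequality, and the masquerade attack on $n$ normal sensors yielding the $h(z)$ trade-off). Your two-regime construction for part (2) --- disjoint flipped sets of normal sensors when $2n\le m-m_s$, and overlapping sets covering all normal sensors with a common distribution on the overlap when $2n>m-m_s$, leaving respectively $m-2n$ or $m_s$ discriminating sensors --- is exactly the adaptation of the paper's $\mathcal{O}',\mathcal{O}''$ attack that the authors identify as the ``essential difference'' caused by the uncompromisable secure sensors.
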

The above theorem is proved in the same manner as in Appendix~\ref{Proof:FundamentalLimit}. Notice that the essential difference is the range of $\bS_s(f)$, i.e., the statement in the second bullet. This is due to the fact that the $m_s$ secure sensors cannot be compromised.
\begin{remark}
From the above theorem, one sees that replacing $m_s$ normal sensors with secure sensors does not change the fundamental trade-off between the security and efficiency. However, the benefit of these $m_s$ secure sensors are that the security itself is improved when $2n > m-m_s$. Also, one notice that when $m-m_s\geq 2n$, there are no gains of deploying secure sensors. Intuitively, in such case the redundancy of the $m-m_s$ normal sensors is enough.
\end{remark}

Furthermore, the detector $f^s_{z_s}$ in Algorithm~\ref{alg:variation}, which is a slight variation of $f^*_{z_s}$ and treats the $m_s$ secure sensors separately, achieves the limits. This is stated in the following theorem, which is proved in the same manner as in Appendix~\ref{sec:proofAchieve}.
\begin{theorem}
If the pair $(z_e,z_s)$ satisfies
  \begin{align*}
   0\leq  &z_s \leq \max((m-2n)C, m_sC), \\
    &z_e \leq\begin{cases}
      mC & \text{if } z_s=0
      \\
      h_e(z_s) & \text{if }  z_s > 0
    \end{cases},
  \end{align*}
then there holds
  \[\E_s(f_{z_s}^s)\geq z_e, \quad \bS_s(f_{z_s}^s) \geq z_s.\]
\end{theorem}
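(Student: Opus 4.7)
The proof proposal follows the same blueprint as the proof of Theorem~\ref{theorem:achieve}, but with the bookkeeping adapted to the partition of sensors into secure and normal ones. Let $\M_s\subset\M$ denote the index set of secure sensors ($|\M_s|=m_s$) and $\M_n\triangleq\M\setminus\M_s$ the normal ones. The detector $f^s_{z_s}$ in Algorithm~\ref{alg:variation} should compute the same empirical log-likelihood averages $\bar\lambda_i(k)$ as in~\eqref{eqn:barlambda}, but when it forms $\delta(\theta,k)$ it only minimizes over subsets of $\M_n$ of size $m-m_s-n$ (since the attacker can only discard at most $n$ indices inside $\M_n$), and it always adds the full contribution $\sum_{i\in\M_s}I_\theta(\bar\lambda_i(k))$ from the secure sensors. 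A final-stage tie-breaker analogous to Step~5 of Algorithm~\ref{alg:fstar} uses the total likelihood $\sum_{i=1}^m\bar\lambda_i(k)$.

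The plan is then to upper-bound the two error probabilities $e(\theta,\I,k)$ via Cramér's theorem exactly as in Appendix~\ref{sec:proofAchieve}. For the security bound I would partition the $m-m_s$ normal indices into the actual compromised set $\I\subseteq\M_n$ (of cardinality $n$) and its complement. Because the $m_s+(m-m_s-n)=m-n$ uncompromised sensors are i.i.d.\ under the true hypothesis, a straightforward adaptation of the multidimensional Cram\'er's Theorem shows that $\mathbb P_\theta(\delta(1-\theta,k)<z_s k)$ decays at the rate $z_s$ provided $z_s\leq(m-n)D_{\min}$, and that the probability of reaching Step~5 decays at rate at least $\max((m-2n)C,\,m_sC)$. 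The first term $(m-2n)C$ comes from the usual ``hiding'' argument (the detector cannot tell the compromised $n$ from $n$ benign normal sensors), while the $m_sC$ lower bound is obtained by ignoring the entire set $\M_n$ and invoking Cram\'er's Theorem on the $m_s$ secure sensors alone, which upper bounds the same error probability independently of the attacker's behavior.

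For the efficiency bound I would set $\mathbf y^a\equiv\mathbf 0$ and compute the decay rates of $\mathbb P^o_\theta(\delta(\theta,k)<z_s k)$ and of the Step~5 event $\{\sum_i\bar\lambda_i(k)$ on the wrong side of $0\}$. The former is governed by the convex program defining $h(z_s)$, and since the constraint of selecting $m-m_s-n$ coordinates from $\M_n$ together with the full $m_s$ secure coordinates still leaves $m-n$ degrees of freedom, the rate is the same function $h(z_s)$ from~\eqref{eqn:definitionhz}. The Step~5 event, by the usual Chernoff bound on the sum of $m$ i.i.d.\ log-likelihoods, decays at rate $mC$. Combining gives $\E_s(f^s_{z_s})\geq \min\{mC,h(z_s)\}=h_e(z_s)\geq z_e$.

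The main obstacle I foresee is the interplay between the two terms inside the $\max$ in the security bound: one must verify that the proposed detector simultaneously realizes both lower bounds on the same error event rather than switching between two different rules. In the regime $m-m_s\geq 2n$ the $(m-2n)C$ bound dominates and the analysis is a carbon copy of Appendix~\ref{sec:proofAchieve}; in the regime $m-m_s<2n$ only the secure sensors can guarantee a positive rate, and one needs to argue that the $I_\theta(\bar\lambda_i(k))$ contribution from $\M_s$ alone is already large enough to trigger the decision well before the attacker can manipulate $\delta(1-\theta,k)$. A careful case analysis on whether $z_s\leq m_sC$ or $z_s\leq(m-2n)C$ (exactly one of which is the binding constraint) should close the argument without any new technical machinery beyond what was used for Theorem~\ref{theorem:achieve}.
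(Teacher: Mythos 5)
Your proposal is correct and follows essentially the same route as the paper, which itself proves this theorem by rerunning the argument of Appendix~\ref{sec:proofAchieve} with the index sets in Definitions~\ref{def:ball}--\ref{def:eball} restricted to contain all of $\M_s$ plus $m-m_s-n$ normal sensors. The ``interplay between the two terms inside the $\max$'' that you flag as the main obstacle is not really a case split: any two admissible kept sets intersect in at least $m_s+(m-m_s-2n)^+=\max(m_s,\,m-2n)$ indices, so Lemma~\ref{lemma:miniDimensionReduce} applied with $p=\max(m_s,\,m-2n)$ yields the separation of the two extended balls for all $z_s\leq \max((m-2n)C,\,m_sC)$ in one stroke, after which the Cram\'er-type upper bounds go through verbatim.
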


%
\begin{algorithm}
1:\: Compute $\bar\lambda_i(k)$ for each sensor. \\
2:\: Compute $I_0(\bar\lambda_i(k))$ and $I_1(\bar\lambda_i(k))$ for each sensor. Compute the minimum sum from the ``normal'' sensors:
  \begin{align*}
    \delta(0,k) &= \min_{|\mathcal{O}|=m-m_s-n, \mathcal{O}\subset\{1,\ldots,m-m_s\}}\sum_{i\in\mathcal{O}} I_0(\bar\lambda_i(k)), \\
        \delta(1,k) &=\min_{|\mathcal{O}|=m-m_s-n,\mathcal{O}\subset\{1,\ldots,m-m_s\}}\sum_{i\in\mathcal{O}} I_1(\bar\lambda_i(k)).
  \end{align*}\\
3:\: If $\delta(0,k) + \sum_{m-m_s+1}^{i=m}I_0(\bar\lambda_i(k))  <z_s$, make a decision $\hat\theta=0$; go to the next step otherwise. \\
4:\: If $\delta(1,k) + \sum_{m-m_s+1}^{i=m}I_1(\bar\lambda_i(k))  <z_s$, make a decision $\hat\theta=1$; go to the next step otherwise. \\
5:\: If $\sum_{i=1}^{m} \bar\lambda_i(k) <0$, make a decision $\hat\theta=0$; make a decision $\hat\theta=1$ otherwise.
\caption{Hypothesis testing algorithm $f^{s}_{z_s}$ when there are secure sensors}
\label{alg:variation}
\end{algorithm}


\subsection{Unknown Number of Compromised Sensors}
In the previous section, we assume that if the system is being attacked, then $n$ sensors are compromised. However, in practice, the exact number of compromised sensors is likely to be unknown.
In this subsection, we assume that we know an estimated upper bound on the compromised sensors, denoted by $n$. Let $n_a$ denote the number of the sensors that are actually compromised. Therefore, $n_a$ may take value in $\mathcal{N}_a\triangleq\{0,1,2,\ldots n\}$\footnote{{In Section~\ref{subsec:attackmodel}, we remark that the requirement $|\I|=n$ can be equivalently replaced by $|\I|\leq n$. The implicit assumption is that the estimated upper bound $n$ is tight and the worst-case number of compromised sensors is  in indeed $n$. Therefore, $n_a$ in this section may also be interpreted as the tight upper bound of the number of actually compromised senors. }}.

Given a detector $f$, denote by $\mathpzc{D}_{n_a}(f)$ the detection performance when the number of compromised sensor is $n_a$.
Then, one has $\mathpzc{D}_{0}(f) = \E(f)$ and $\mathpzc{D}_{n}(f) = \bS(f)$.
In the following, we present the pairwise trade-off between $\mathpzc{D}_{n_a}(f)$ and $\mathpzc{D}_{n'_a}(f)$ for any $0\leq n_a,n'_a \leq n$, and propose an algorithm to achieve these performance limits.
A similar argument as in Section~\ref{sec:fundamentalLimitation} can be adopted to obtain these results, the details of which are omitted.

We define $\tilde{h}: \mathcal{N}_a \times \mathcal{N}_a \times (0,\infty) \mapsto (0,\infty)$ as
\begin{align*}
  &\tilde{h}(n_a,n_a',z) \\
  \triangleq&
              (m- \tilde{n}_a)\min\left\{\tilde{h}_0(z/(m-\tilde{n}_a)), \tilde{h}_1(z/(m-\tilde{n}_a))\right\},
\end{align*}
where $\tilde{n}_a = n_a + n_a'$, and
\begin{align*}
\tilde{h}_0(z) &=  \begin{cases}
        I_0(I^{-1}_1(z)) & \text{ if } 0<z<D(0||1)\\
        0 & \text{if }z \geq D(0||1)
        \end{cases}, \\
\tilde{h}_1(z) &=  \begin{cases}
        I_1(I^{-1}_0(z)) & \text{ if } 0<z<D(1||0)\\
        0 & \text{if }z \geq D(1||0)
        \end{cases}.
\end{align*}
Then one obtains that for any detector $f$ and $n_a,n_a'\in\mathbb{N}$, there hold
\begin{align}
  \mathpzc{D}_{n_a}(f) &\leq (m-2n_a)C,  \label{eqn:extbound}            \\
  \mathpzc{D}_{n_a}(f) &\leq \tilde{h}\left(n_a,n_a', \mathpzc{D}_{n_a'}(f)\right). \label{eqn:exttradeoff}
\end{align}
Let $\mathbf{z}\triangleq(z_0,z_1,\ldots,z_n)$ be a $n$-tuplet of admissible detection performance, i.e.,
\begin{align*}
  z_{n_a} &\leq (m-2n_a)C,          \\
  z_{n_a} &\leq \tilde{h}(n_a,n_a', z_{n_a'}).
\end{align*}
Then the detector in Algorithm~\ref{alg:fextension}, which is a variation of $f_{z_s}^*$ in Section~\ref{sec:subAchieve} and is denoted by $f^*_{\bf z}$, can achieve these performance, i.e., $\mathpzc{D}_{n_a}(f^*_{\bf z}) \geq z_{n_a}$ for any $n_a\in\mathbb{N}$.

\begin{algorithm}
 {\bf initialization:} $n_a =n$. \\
 1:\: Compute $\bar\lambda_i(k), I_0(\bar\lambda_i(k)), I_1(\bar\lambda_i(k))$ for each sensor $i$.\\
 2:\: {\bf While} $n_a \geq 1$
\hspace*{3mm} \begin{enumerate}
   \item Compute these two minima:
  \begin{align*}
    \tilde{\delta}(0,k,n_a) &\triangleq \min_{|\mathcal O|=m-n_a, \mathcal O \subset \M}\sum_{i\in\mathcal O} I_0(\bar\lambda_i(k)), \\
    \tilde{\delta}(1,k,n_a) &\triangleq \min_{|\mathcal O|=m-n_a, \mathcal O \subset \M}\sum_{i\in\mathcal O} I_1(\bar\lambda_i(k)).
  \end{align*}
   \item If $\tilde\delta(0,k,n_a)<z_{n_a}$, make a decision $\hat\theta=0$ and stop.
   \item If $\tilde\delta(1,k,n_a)<z_{n_a}$, make a decision $\hat\theta=1$ and stop.
   \item Replace $n_a$ with $n_a -1$.
 \end{enumerate}
3:\: If $\sum_{i=1}^{m} \bar\lambda_i(k) <0$, make a decision $\hat\theta=0$; make a decision $\hat\theta=1$ otherwise.
\caption{Hypothesis testing algorithm $f^*_{\bf z}$}
\label{alg:fextension}
\end{algorithm}

%


\section{Numerical Examples}
\label{sec:simulation}
{\subsection{Asymptotic Performance}}
We simulate the performance of the detector $f_{z_s}^*$ proposed in Section~\ref{sec:subAchieve} {(i.e., its efficiency and security)} and compare the  empirical results to the theoretical ones shown in Fig.~\ref{Fig:FundaLim}.

The same parameters as in Fig.~\ref{Fig:FundaLim} are used, i.e., $\mathbb P_0^{o}(y_1(1)=1)=0.02$, $\mathbb P_1^{o}(y_1(1)=1)=0.6$, $m=9$ and $n=2$. To simulate the security, it is assumed that the following attack strategy is adopted. If $\theta=0$, the attacker modifies the observations of the compromised sensors such that for every $i\in\I$ and $k\geq 1$
\begin{align*}
I_0(\bar\lambda_i(k)) \geq z_s.
\end{align*}
On the other hand,  if $\theta=1$, the attack strategy is such that
$
I_1(\bar\lambda_i(k)) \geq z_s
$ holds for every $i\in\I$ and $k\geq 1$.

To simulate the performance with high accuracy, we adopt the importance sampling approach~\cite{rubinstein2016simulation}. To plot Fig.~\ref{Fig:Compare}, we let $z_s$ be in $[0,(m-2n)C=1.5987]$. Notice that the theoretical performance of $f^*_{z_s}$ coincides exactly with the fundamental limits in Fig.~\ref{Fig:FundaLim}. Therefore, Fig.~\ref{Fig:Compare} verifies that our algorithm $f^*_{z_s}$ indeed achieves the fundamental limits.

\begin{figure}[ht]
\center
\input{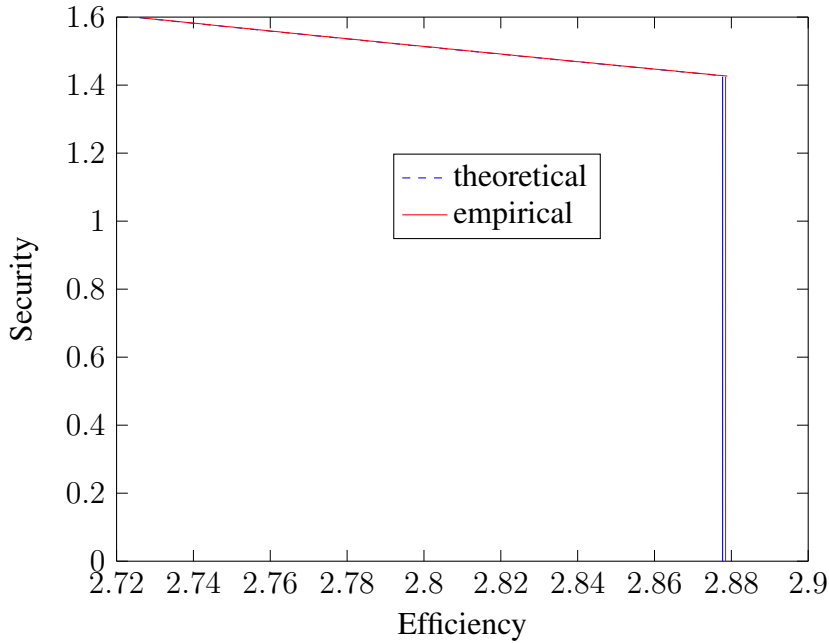}
\caption{Comparison between the empirical and  theoretical performance of the detector $f^*_{z_s}$ when $z_s\in[0,(m-2n)C]$. }
\label{Fig:Compare}
\end{figure}

{\subsection{Non-asymptotic Performance}}
{We have proved that our algorithm is optimal in the sense that it achieves the fundamental trade-off between the security and efficiency. However, notice that both the security and efficiency are asymptotic performance metrics. In this example, we show that our algorithm possesses quite ``nice" finite-time performance as well by comparing it to the naive Bayes detector. We should remark that while the Bayes detector is strictly optimal (i.e., optimal for any time horizon) in the absence of attackers, its security is zero. }
The results are in Fig.~\ref{Fig:finitetime}, where $z_s$ is chosen to be $1.4282$.
Fig.~\ref{Fig:finitetime} illustrates that the algorithm $f^*_{z_s}$ with $z_s=1.4282$ has a finite-time detection performance comparable to that of naive Bayes detector when the attacker is absent.
{The finite-time performance metric $\epsilon(k)$ is defined in~\eqref{eqn:worstcase}, where the attacker is absent, i.e., $g=0$, and the detector is $f^*_{z_s=1.4282}$ or the naive Bayes.}
One should note that the  security of $f^*_{z_s}$ is $1.4282$. As a result, adopting the secure detector $f_{z_s}^*$ increases the security of the system while introducing minimum performance loss in the absence of the adversary.

\begin{figure}[ht]
	\center
%
%
\begin{tikzpicture}

\begin{axis}[%
width=4.1in,
height=2.85in,
at={(0.758in,0.481in)},
scale only axis,
separate axis lines,
every outer x axis line/.append style={black},
every x tick label/.append style={font=\color{black}},
every x tick/.append style={black},
xmin=1,
xmax=10,
xlabel={time $k$},
every outer y axis line/.append style={black},
every y tick label/.append style={font=\color{black}},
every y tick/.append style={black},
ylabel={$\epsilon(k)$},
ymode=log,
axis background/.style={fill=white},
legend style={legend cell align=left, align=left, draw=black}
]
\addplot [color=blue,dashed]
  table[row sep=crcr]{%
1	1.3078e-02\\
2	3.9211e-04\\
3	1.3203e-05\\
4	7.9448e-07\\
5	5.0116e-08\\
6	3.2024e-09\\
7	2.0816e-10\\
8	1.3572e-11\\
9	4.9711e-13\\
10	1.9296e-14\\
};
\addlegendentry{Naive Bayes}
\addplot [color=red]
  table[row sep=crcr]{%
1	2.5074e-02\\
2	8.3486e-04\\
3	1.2897e-05\\
4	1.5836e-06\\
5	4.8561e-08\\
6	3.8571e-09\\
7	2.0210e-10\\
8	1.0938e-11\\
9	5.6596e-13\\
10	2.0986e-14\\
};
\addlegendentry{$f_{1.4282}^*$}

\end{axis}
\end{tikzpicture}%
\caption{Finite-time performance of $f^{*}_{z_s}$ in the absence of the adversary. }
\label{Fig:finitetime}
\end{figure}
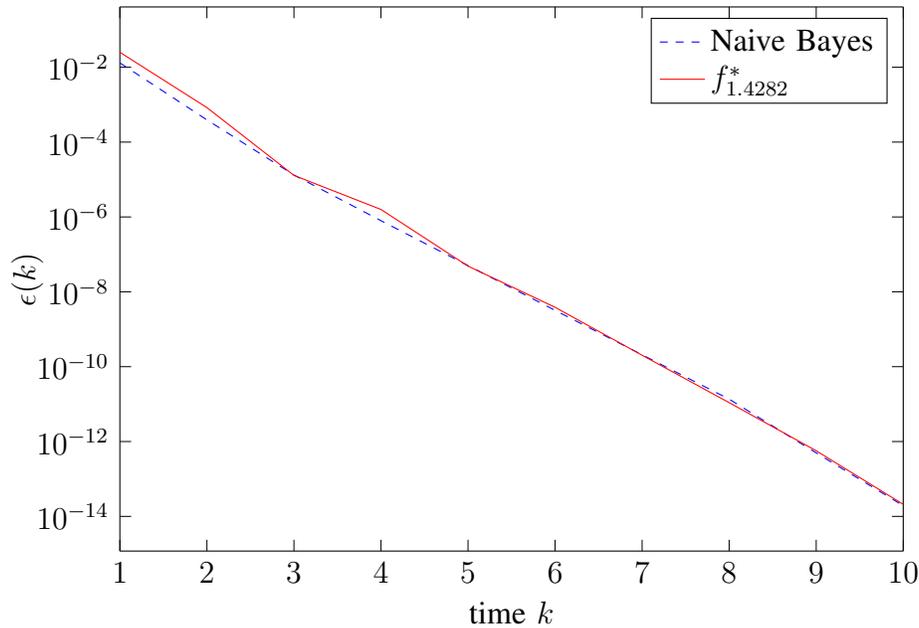

{
\subsection{Comparison with Other Detectors}
To simulate the detectors introduced later, we use the same sensor network model as in Fig.~\ref{Fig:Compare}. The asymptotic performances, i.e., the efficiency and security, are summarized in Table~\ref{table:asymptotic}, while the non-asymptotic performances when the attacker is absent are in Fig.~\ref{Fig:finiteCompare}. Table~\ref{table:asymptotic} is consistent with the statement that our algorithm achieves the best trade-off between the security and efficiency, while Fig.~\ref{Fig:finiteCompare} shows that it is preferable to adopt our algorithm as well with respect to the finite-time performance when the attacker is absent. In the following, we present the two detectors to be compared to in detail.

The first detector is the equilibrium detection rule that is proposed in~\cite{Jiaqi_CDC_detection} for cases where $m>2n$.  This detection rule, which shares the same spirit with the $\alpha$-trimmed mean in robust statistics~\cite{huber2011robust}, first removes the largest $n$ and smallest $n$ log-likelihood ratios, and then compares the mean of the remaining $m-2n$ log-likelihood ratios to $0$, just as in the classic probability ratio test. The details of the detection rule, denoted by $f_{{\rm trim}}$, are formalized as follows.
\begin{align*}
f_{{\rm trim}}(\mathbf Y'(k)) = \begin{cases}
    0 & \text{if } \sum_{i=n+1}^{m-n} \bar\lambda_{[i]}(k) <0,\\
    1 & \text{if } \sum_{i=n+1}^{m-n} \bar\lambda_{[i]}(k) \geq 0,
  \end{cases}
\end{align*}
where $\bar\lambda_{[i]}(k)$ is the $i$-th smallest element of $\{\bar\lambda_{1}(k),\bar\lambda_{2}(k),\ldots,\bar\lambda_{m}(k)\}$ with $\bar\lambda_{i}(k)$ being the empirical mean of log-likelihood ratio from time $1$ to $k$ for senor $i$, which is defined in~\eqref{eqn:barlambda}.

It was shown in~\cite{Jiaqi_CDC_detection} that the  security and efficiency of $f_{{\rm trim}}$ are
\begin{align*}
\bS(f_{{\rm trim}}) = (m-2n)C, \quad  \E(f_{{\rm trim}}) = (m-n)C.
\end{align*}
Since for any $z<C$, there hold
\begin{align*}
 I_0(I_1^{-1}(z)) > C, \quad   I_1(I_0^{-1}(z)) > C.
\end{align*}
Then by the definition of $h(z)$ and Theorem~\ref{theorem:achieve}, one obtains that if the security of our algorithm is $(m-2n)C$, its efficiency is larger than $(m-n)C$, i.e.,
\[\E(f^*_{z_s = (m-2n)C}) > (m-n)C.\]
Therefore, our algorithm is preferable since, with the same security, it achieves the larger efficiency than the algorithm $f_{{\rm trim}}$. In particular, by Theorem~\ref{theorem:SEsimultaneously}, the efficiency gain of our algorithm in certain cases is $nC$.

The next detector is the $q$-out-of-$m$ procedure~\cite{viswanathan1989counting}, which has been studied in~\cite{abdelhakim2014distributed} by assuming that the malicious sensor nodes generate fictitious data randomly and independently, and the probability that the compromised sensor flips the binary message  is known.
The $q$-out-of-$m$ procedure is simple and works as follows.  At time $k$, after receiving the $mk$ (binary) messages, the fusion center makes a decision
\begin{align}\label{eqn:qoutm}
  \hat{\theta}=\begin{cases}
      1 & \text{if } \sum_{t=1}^k\sum_{i=1}^{m} y'_i(k) \geq q_k,
      \\
      0 & \text{otherwise.}
    \end{cases}
\end{align}
Let ${\bf q} = [q_1,\ldots,q_k,\ldots]$ be a sequence of thresholds used in the above detector from time $1$ to infinity. In the sequel, we denote the above detector as $f_{{\rm qom}}(\bf q)$.
Notice that $f_{{\rm qom}}(\bf q)$ is just the naive Bayesian detector, which minimizes the weighted sum of miss detection and false alarm at each time $k$ (the weight is determined by $q_k$). It is clear that if $f_{{\rm qom}}(\bf q)$ is used at the fusion center, the worst-case attack is always sending $0$ ($1$) if the true state $\theta$ is $1$ ($0$). Therefore, at time $k$, the performance (i.e., the probability of detection error) of the detector $f_{{\rm qom}}(\bf q)$ under the worst-case attacks is as follows.
\begin{align*}
 &\mathbb P_1(f_{{\rm qom}}({\bf q}) = 0) = \sum_{j=0}^{q_k}
\begin{pmatrix}
  mk-nk \\
  j
\end{pmatrix} p_1^j(1-p_1)^{mk-nk-j}, \\
 &\mathbb P_0(f_{{\rm qom}}({\bf q}) = 1)\\
  &= \sum_{j=\max(0,q_k-nk)}^{mk-nk}
\begin{pmatrix}
  mk-nk \\
  j
\end{pmatrix} p_0^j(1-p_0)^{mk-nk-j},
\end{align*}
where $p_0\triangleq \mathbb P_0^{o}(y_1(1)=1) = 0.02, p_1\triangleq \mathbb P_1^{o}(y_1(1)=1) =0.6$.
Then it is reasonable to set $nk<q_k<mk-nk$, since otherwise the worst-case (over $
\theta$) detection error will be $1$. However it is challenging to obtain the optimal $q_k$ analytically to minimize the worst-case detection error; we do this by brute-force numerical simulations. By varying the time $k$ from $1$ to $40$, we obtain the (approximate) security and the optimal parameters $q_1^*,\ldots,q_{40}^*$. Then we further simulate the performance of the $q$-out-$m$ algorithm when the optimal parameters obtained above are used and the attacker is absent.

\begin{table}[!htb]
\center
  \caption{The asymptotic performances of our algorithm $f_{z_s}^*$ with $z_s = 1.4282$, the trimmed mean detector in~\cite{Jiaqi_CDC_detection} $f_{\rm trim}$,  and the optimal q-out-m procedure $f_{{\rm qom}}(\bf q^*)$. }
  \begin{tabular}{ |c || c | c | c |}
    \hline
     & $f_{z_s = 1.4282}^*$ & $f_{\rm trim}$ & $f_{{\rm qom}}(\bf q^*)$ \\ \hline

     security & 1.43 & 1.43 & 0.69    \\ \hline
     efficiency& 2.88  & 2.00 & 1.68  \\ \hline

  \end{tabular}
 \label{table:asymptotic}
\end{table}

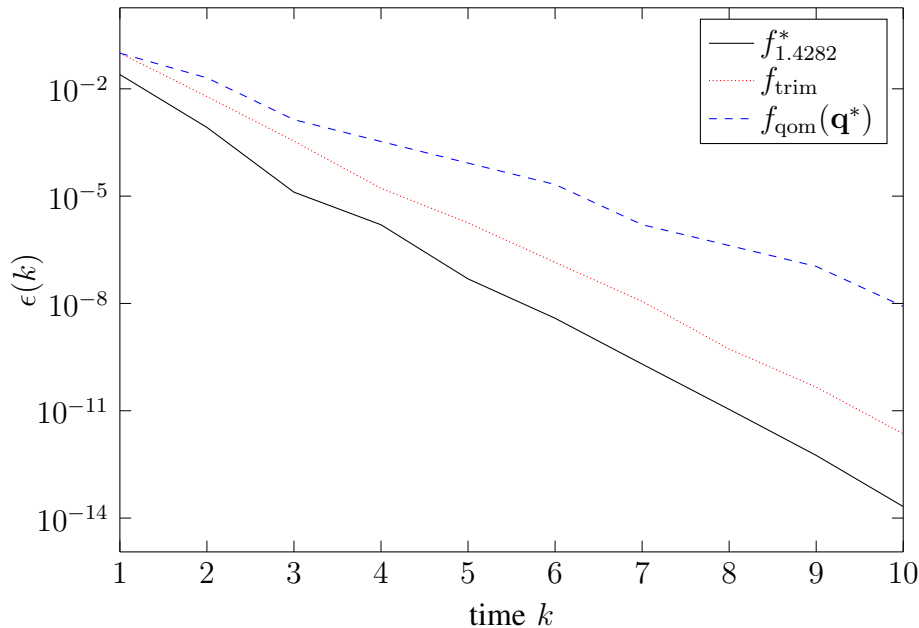
\begin{figure}[ht]
	\center
%
%
%
\begin{tikzpicture}

\begin{axis}[%
width=4.1in,
height=2.85in,
at={(0.758in,0.481in)},
scale only axis,
separate axis lines,
every outer x axis line/.append style={black},
every x tick label/.append style={font=\color{black}},
every x tick/.append style={black},
xmin=1,
xmax=10,
xlabel={time $k$},
every outer y axis line/.append style={black},
every y tick label/.append style={font=\color{black}},
every y tick/.append style={black},
ylabel={$\epsilon(k)$},
ymode=log,
axis background/.style={fill=white},
legend style={legend cell align=left, align=left, draw=black}
]

\addplot [color=black]
  table[row sep=crcr]{%
1	2.5074e-02\\
2	8.3486e-04\\
3	1.2897e-05\\
4	1.5836e-06\\
5	4.8561e-08\\
6	3.8571e-09\\
7	2.0210e-10\\
8	1.0938e-11\\
9	5.6596e-13\\
10	2.0986e-14\\
};
\addlegendentry{$f_{1.4282}^*$}

\addplot [color=red,densely dotted]
  table[row sep=crcr]{%
1	0.0994537\\
2	0.00609034\\
3	0.000345445 \\
4	1.64921e-5\\
5	1.78819e-6\\
6	1.40509e-7\\
7	1.14828e-8\\
8	5.32103e-10 \\
9	4.57181e-11\\
10	2.2707e-12\\
};
\addlegendentry{$f_{{\rm trim}}$}

\addplot [color=blue,dashed]
  table[row sep=crcr]{%
1	0.099352576\\
2	0.0202816129607926\\
3	0.00137189104847751\\
4	0.000336178709150183\\
5	8.38791020090510e-05\\
6	2.11980555561062e-05\\
7	1.59279240845934e-06\\
8	4.15517798057321e-07\\
9	1.08683810779229e-07\\
10	8.27510707965875e-09\\
};
\addlegendentry{$f_{{\rm qom}}(\mathbf{q}^*)$}

\end{axis}
\end{tikzpicture}%
\caption{Finite-time performance of $f^{*}_{z_s}, f_{\rm trim}$,  and $f_{{\rm qom}}(\bf q^*)$  in the absence of the adversary. }
\label{Fig:finiteCompare}
\end{figure}

}

\section{Conclusion and Future Work}
\label{sec:conclusion}
In this paper, we studied the trade-off between the detection performance of a detector when the attacker is absent (termed efficiency) and the ``worst-case'' detection performance when the attacker, knowing the detector, is present (termed security).
The setting is that a  binary hypothesis testing
is conducted based on measurements from a set of sensors, some of which can be compromised by an attacker and their measurements can be manipulated arbitrarily.
We first provided the fundamental limits of the trade-off between the efficiency and security of any detector.
We then presented detectors that possesses the limits of the efficiency and security.
Therefore, a clear guideline on how to balance the efficiency and security has been established for the system operator.
An interesting point of the fundamental trade-off is that in some cases, the maximal efficiency and the maximal security can be achieved simultaneously, i.e., the maximal efficiency (security) can be achieved without compromising any security (efficiency).
In addition, two extensions were investigated: secure sensors are assumed for the first one, and the detection performance beyond the efficiency and security is concerned for the second one.
The main results were verified by numerical examples.
Investigating the problem when the measurements from the benign sensors are not i.i.d. is a future direction.

\appendices
\section{The Proof of Theorem~\ref{theorem:ProperRateFunc}}
\label{Proof:ProperRateFunc}
The following lemma is needed to prove Theorem~\ref{theorem:ProperRateFunc}:
\begin{lemma}
If Assumption~\ref{assumpt:0interior} and \ref{assumpt:finiteKLs} hold, then the following statement is true:
\begin{enumerate}
  \item For any $w$,
\begin{align}
  \label{eqn:M0M1}
M_0(w+1) =M_1(w).
\end{align}
\item There exists a small enough $\epsilon > 0$, such that $\log M_0(w)$ is well-defined on $[-\epsilon, 1+\epsilon]$, and $\log M_1(w)$ is well-defined on $[-1-\epsilon, \epsilon]$.
\item $\log M_0(w),\,\log M_1(w)$ are strictly convex.
\item The derivative of $\log M_0(w)$ and $\log M_1(w)$ satisfy
\begin{align}
  (\log M_0(w))^{(1)}|_{w=1} &= D(1\|0). \label{eqn:firstDeriva}\\
  (\log M_0(w))^{(1)}|_{w=0} &= -D(0\|1).\label{eqn:firstDerivb}\\
  (\log M_1(w))^{(1)}|_{w=0} &= D(1\|0).\label{eqn:firstDerivc}\\
  (\log M_1(w))^{(1)}|_{w=-1} &= -D(0\|1).\label{eqn:firstDerivd}
\end{align}
\end{enumerate}
\end{lemma}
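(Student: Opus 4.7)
The plan is to prove the four statements sequentially, since each builds on the previous. Part 1 is a direct Radon--Nikodym change of measure: since $e^{\lambda(y)} = d\mu/d\nu$, we have
\[
M_0(w+1) = \int e^{(w+1)\lambda}\, d\nu = \int e^{w\lambda}\, e^{\lambda}\, d\nu = \int e^{w\lambda}\, d\mu = M_1(w).
\]
An immediate corollary is the translation identity $\dom_0 = 1 + \dom_1$, which I will use repeatedly.

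For part 2, I rely on the standard fact that the domain of an MGF is always convex (an interval in $\R$). By Assumption~\ref{assumpt:0interior}, pick $\epsilon > 0$ small enough that $[-\epsilon,\epsilon] \subset \dom_0 \cap \dom_1$. Part 1 translates the second inclusion into $[1-\epsilon,1+\epsilon] \subset \dom_0$; convexity of $\dom_0$ then yields $[-\epsilon, 1+\epsilon] \subset \dom_0$, and the symmetric argument (together with $\dom_1 = \dom_0 - 1$) produces $[-1-\epsilon,\epsilon] \subset \dom_1$. Since $M_0,M_1$ are strictly positive (integrals of strictly positive functions), the logarithms are well defined on these intervals. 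For part 3, I will invoke the classical identity
\[
(\log M_0)''(w) = \mathrm{Var}_{\nu_w}[\lambda],
\]
where $\nu_w$ is the exponentially tilted measure with density $e^{w\lambda}/M_0(w)$ with respect to $\nu$. Strict positivity of this variance reduces to non-degeneracy of $\lambda$ under $\nu_w$. Assumption~\ref{assumpt:finiteKLs} gives $\int \lambda\, d\mu = D(1\|0) > 0$ and $\int \lambda\, d\nu = -D(0\|1) < 0$, so $\lambda$ cannot be $\nu$-a.s. (nor $\mu$-a.s.) constant; since $\nu_w$ is equivalent to $\nu$ for $w$ in the interior of $\dom_0$, non-degeneracy carries over. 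Strict convexity of $\log M_1$ then follows from the translation identity $\log M_1(w) = \log M_0(w+1)$.

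For part 4, I differentiate under the integral sign, which is legitimate because part 2 places $\{-1,0,1\}$ inside the open intervals on which the MGFs are analytic. This gives $(\log M_0)'(w) = \int \lambda e^{w\lambda}\, d\nu / M_0(w)$. Evaluating: at $w=1$ we have $M_0(1) = \int (d\mu/d\nu)\, d\nu = 1$ and $M_0'(1) = \int \lambda\, d\mu = D(1\|0)$, yielding \eqref{eqn:firstDeriva}. At $w=0$, $M_0(0)=1$ and $M_0'(0) = \int \lambda\, d\nu = -D(0\|1)$, yielding \eqref{eqn:firstDerivb}. The $\log M_1$ identities follow either from the same direct calculation or, more economically, from $(\log M_1)'(w) = (\log M_0)'(w+1)$: setting $w=0$ recovers $D(1\|0)$ and $w=-1$ recovers $-D(0\|1)$.

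The main technical point to be careful about is the justification for differentiating the MGF under the integral at the specific values $w=0,1$ (resp.\ $w=-1,0$). This is precisely why part 2 is needed: once we know these values lie in the \emph{interior} of the corresponding domain, the standard analyticity theorem for MGFs (proved via dominated convergence applied to the power-series expansion around any interior point) provides smoothness and the exchange of differentiation and integration. Everything else is bookkeeping with the Radon--Nikodym identity from part 1.
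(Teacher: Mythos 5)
Your proof is correct and follows essentially the same route as the paper: the Radon--Nikodym change of measure for part 1, convexity of the MGF domain plus the translation identity $\dom_0 = 1+\dom_1$ for part 2, strict positivity of the second derivative for part 3, and differentiation under the integral together with $M_0(0)=M_0(1)=1$ for part 4. If anything your part 3 is slightly more careful than the paper's, since you use the correct identity $(\log M_0)''(w)=\mathrm{Var}_{\nu_w}[\lambda]$ (the paper's displayed formula is actually $M_0''(w)$ rather than $(\log M_0)''(w)$), and your non-degeneracy argument for why this variance is strictly positive is exactly the justification the paper leaves implicit.
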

\begin{proof}
  By definition,
\begin{align*}
  M_0(w+1) &= \int_{-\infty}^\infty \left( \frac{\rm d\mu}{\rm d\nu}(y)\right)^w \frac{\rm d\mu}{\rm d\nu}(y){\rm d}\nu(y)\\
&= \int_{-\infty}^\infty \left(\frac{\rm d\mu}{\rm d\nu}(y) \right)^w {\rm d\mu}(y) = M_1(w) ,
\end{align*}
which proves \eqref{eqn:M0M1}.

 Assuming $M_0(w_1), \, M_0(w_2) < \infty$ and $w_1 < w_2$, by the convexity of the exponential function, we know that for any $\lambda$ and $0<\alpha,\beta < 1$ and $\alpha + \beta = 1$,
  \begin{align*}
    0 < \exp\left[(\alpha w_1+\beta w_2)\lambda\right]\leq \alpha e^{w_1\lambda}+\beta e^{w_2\lambda}.
  \end{align*}
  Therefore $0< M_0(\alpha w_1 + \beta w_2) \leq \alpha M_0(w_1)+\beta M_0(w_2)$ is well-defined, which proves that the domain of $\log M_0(w)$ is convex.

Furthermore, by Assumption~\ref{assumpt:0interior}, $0\in\inte(\dom_1)$ gives
\begin{align} \label{eqn:1interi0}
  1\in \inte(\dom_0),
\end{align}

Hence, $[0,1] \subset \inte(\dom_0)$, which proves that $\log M_0(w)$ is well-defined on $[-\epsilon,1+\epsilon]$ if $\epsilon$ is small enough.

It is well known that $\log M_0(w)$ is \emph{infinitely differentiable} on $\inte(\dom_0)$ (see~\cite[Exercise 2.2.24]{dembo2009large}).
Basic calculations give that
\begin{align}
  &\quad(\log M_0(w))^{(2)}  \\
  &= \int_{\R}\left(\frac{\rm d\mu}{\rm d \nu}(y) \right)^w \left(\log\left(\frac{\rm d\mu}{\rm d \nu}(y) \right)\right)^2 \rm d\nu(y) >0 \label{eqn:secondDerivative}
\end{align}
always holds, where $(\log M_0(w))^{(2)}$ is the second derivative. The above quantity is strictly positive since the KL divergence between probability measure $\mu$ and $\nu$ are strictly positive by Assumption~\ref{assumpt:finiteKLs}. Therefore, $\log M_0(w)$ is strictly convex.

The domain and the strict convexity of $\log M_1(w)$ can be proved similarly.

Take the derivative of $\log M_0(w)$ at $w = 1$ yields
\begin{align*}
  (\log M_0(w))^{(1)}|_{w=1} &= \int_{\R}\lambda(y)\frac{\rm d\mu}{\rm d \nu}(y) {\rm d\nu}(y)  \\
&=\int_{\R}\lambda(y)  { \rm d\mu}(y) = D(1\|0).
\end{align*}
Equation \eqref{eqn:firstDerivb}, \eqref{eqn:firstDerivc} and \eqref{eqn:firstDerivd} can be proved similarly.
\end{proof}
We are now ready to prove Theorem~\ref{theorem:ProperRateFunc}:
\begin{proof}[Proof of Theorem~\ref{theorem:ProperRateFunc}]
Define the derivative of $\log M_0(w)$ to be $\psi(w)$. Since $\log M_0(w)$ is strictly convex, we know that $\psi(w)$ is strictly increasing and therefore, its inverse function is well defined on $[-D(0\|1), D(1\|0)]$.
Denote the inverse function as $\varphi(x)$.
By the convexity of $\log M_0(w)$, we have that
\begin{align}
  \log M_0(w) \geq  \log M_0(w_*) + \psi (w_*) (w-w_*).
  \label{eqn:strictconvex}
\end{align}
Hence, for any $x \in [-D(0\|1), D(1\|0)]$, suppose that $\psi(w_*) = x$, we have
\begin{align*}
 wx &- \log M_0(w) = \left[w_* \psi(w_*) -\log M_0(w_*)\right]\\
& + \left[(w-w_*) \psi(w_*) + \log M_0(w_*) - \log M_0(w)\right].
\end{align*}
Notice the last term on the RHS of the equation is non-positive. Hence, we can prove that
\begin{align}
 I_0(x) =  w_* \psi(w_*) -\log M_0(w_*) = \varphi(x)x -\log M_0(\varphi(x)).
  \label{eqn:I0explicit}
\end{align}

Take the derivative and second order derivative of $I_0(x)$ we have
\begin{align*}
 \frac{{\rm d}I_0(x)}{{\rm d}x} = \varphi(x),\, \frac{{\rm d}^2I_0(x)}{{\rm d}x^2} = \frac{1}{\psi^{(1)}(\varphi(x))} > 0,
\end{align*}
where the last inequality is due to the fact that $\log M_0(w)$ is strictly convex, and thus its second derivative $\psi^{(1)}$ is strictly positive. Hence we prove that $I_0(x)$ is twice differentiable and strictly convex on $[-D(0\|1), D(1\|0)]$. Notice that
\begin{align*}
 \left.\frac{{\rm d}I_0(x)}{{\rm d}x}\right|_{x = -D(0\|1)} = \varphi(-D(0\|1)) = 0,
\end{align*}
we can prove that $I_0(x)$ is also strictly increasing. Similarly we can prove the properties for $I_1(x)$.

Combining \eqref{eqn:I0explicit}, \eqref{eqn:firstDeriva} and \eqref{eqn:firstDerivb}, we can prove \eqref{eqn:RateFuncKL00} and \eqref{eqn:RateFuncKL0}. Equation \eqref{eqn:RateFuncKL10} and \eqref{eqn:RateFuncKL1} can be proved similarly.

Since
\begin{align*}
  I_0(0) &= \sup_w 0\cdot w -\log M_0(w) = \sup_w -\log M_0(w),\\
  I_1(0) &= \sup_w 0\cdot w -\log M_1(w) = \sup_w -\log M_1(w),
\end{align*}
and
\begin{align*}
   M_0(w + 1) =  M_1(w),
\end{align*}
We can conclude $I_0(0) = I_1(0)$.
\end{proof}
\section{The Proof of Theorem~\ref{theorem:PerformanceLimitation}}
\label{Proof:FundamentalLimit}
The proof is divided into four parts, each of which is devoted to one of the statements in Theorem~\ref{theorem:PerformanceLimitation}.

{\bf Part \Rmnum{1}.}
For any index set $\mathcal{O} \subset \M$ and $\chi\in\R$, define the following Bayesian like detector:
\begin{align}  \label{eqn:fMAP1}
  f_{k,\chi,\mathcal{O}}(\mathbf Y'(k)) = \begin{cases}
    0 & \text{if } \sum_{i\in\mathcal{O}} \bar\lambda_i(k) <\chi,\\
    1 & \text{if } \sum_{i\in\mathcal{O}} \bar\lambda_i(k) \geq \chi,
  \end{cases}
\end{align}
where $\bar\lambda_i(k)$ is the empirical mean of the log-likelihood ratio from time $1$ to $k$ for sensor $i$, {which is defined in~\eqref{eqn:barlambda}.}
Denote
\[ f_{\chi,\mathcal{O}} = (f_{1,\chi,\mathcal{O}}(\mathbf Y'(1)),\cdots,f_{k,\chi,\mathcal{O}}(\mathbf Y'(k)),\cdots)  \]
and
\begin{align} \label{eqn:fStar}
  f^*_{\mathcal{O}} \triangleq f_{0,\mathcal{O}}.
\end{align}
It is well known that $f^*_{\M}$ minimize the average error probability~\cite{key1993fundamentals}: $e(\theta=0,\mathcal O=\emptyset,k)+e(\theta=1,\mathcal O=\emptyset,k)$, where, recall, $e(\theta,\mathcal{O},k)$ is defined in~\eqref{eqn:errorProb}.
Notice that
\begin{align*}
  &\liminf_{k\rightarrow\infty} -\frac{\log (e(\theta=0,\mathcal O=\emptyset,k)+e(\theta=1,\mathcal O=\emptyset,k))}{k}\\
  = &\liminf_{k\rightarrow\infty} -\frac{\log \max_{\theta} e(\theta,\mathcal O=\emptyset,k)}{k}.
\end{align*}
Hence, when the attacker is absent, $f^*_{\M}$ is optimal  in the sense that the rate $\rho$ defined in~\eqref{eqn:RateObj} is maximized.
Furthermore, Cram\'{e}r's Theorem gives that
$\E(f^*_{\M})  = mI_0(0) = mC$.
Therefore,
$\E(f)\leq mC$ holds for any detector $f$.

{\bf Part \Rmnum{2}.} In this part, we show $\bS(f)\leq(m-2n)^{+}C$.
The proof is by construction: we construct a attack strategy $g^*$ such that, for any detection rule $f$, the following inequality holds:
\begin{align} \label{eqn:universalAttack}
  \rho(f,g^*) \leq (m-2n)^+C.
\end{align}
Let $\mathcal{O}'=\{1,\ldots,n\}$ and $\mathcal{O}'' =\{m-n+1,\ldots,m\}$.
The attack strategy $g^*$ is as follows.
\begin{itemize}
\item[$(\rmnum{1})$.] When $\theta=0$, sensors in $\mathcal{O}'$ are compromised and the distributions are flipped, i.e., the measurements of sensors in $\mathcal{O}'$ are i.i.d. as $\mu$.
\item[$(\rmnum{2})$.]  When $\theta=1$, sensors in $\mathcal{O}''\setminus\mathcal{O}'$ are compromised and the distributions are flipped.
\end{itemize}
Thus, under attack $g^*$, for either $\theta=1$ or $\theta=0$, sensors in $\mathcal{O}'$ will follow distribution $\mu$ and sensors in $\mathcal{O}''\setminus\mathcal{O}'$ will follow distribution $\nu$.
In other words, only sensors in $\M\setminus(\mathcal{O}'\cup\mathcal{O}'')$ have different distributions under different $\theta$.
Notice that when $m\leq 2n$, $\M\setminus(\mathcal{O}'\cup\mathcal{O}'')=\emptyset$, which means that $\rho(f,g^*)=0$.
If $m> 2n$, by the optimality of the detection rule $f^*_{\M\setminus(\mathcal{O}'\cup\mathcal{O}'')}$ defined in~\eqref{eqn:fStar}, one obtains $\rho(f,g^*) \leq (m-2n)C$.
Equation~\eqref{eqn:universalAttack} is thus obtained.

{\bf Part \Rmnum{3}.} It is clear from the definitions of $\E(f)$ and $\bS(f)$ that $\bS(f)\leq \E(f)$ holds.

{\bf Part \Rmnum{4}.} Consider the following product measures:
\begin{align*}
  \mu_{a} &= \underbrace{\mu\times\mu\ldots\times \mu }_{m-n}\times \underbrace{\nu \times \nu \ldots \times \nu}_{n}, \\
  \mu_a^k &= \underbrace{ \mu_a\times \mu_a\ldots\times\mu_a}_{k}
\end{align*}
and
\begin{align*}
  \nu_{*}& =\underbrace{ \nu\times \nu\ldots\times\nu}_{m}, \\
  \nu_{*}^k&=\underbrace{ \nu_*\times \nu_*\ldots\times\nu_*}_{k}.
\end{align*}
The measure $\mu_a$ is generated by an attack that flips the distribution on the last $n$ sensors, when the true hypothesis is $\theta = 1$.
The measure $\nu_{*}$ is generated by benign sensors when the true hypothesis is $\theta = 0$.

Now let us consider the following problem: given $\phi >0$, find the detection rule $f$ such that
\begin{align}
  \mathbb E_{\nu_{*}^k} f_k  + \phi^k\mathbb E_{\mu_{a}^k} (1-f_k)
  \label{eq:npcost}
\end{align}
is minimized for every $k\geq 1$.
Let $f_{\phi}=(f_{\phi,1},\ldots,f_{\phi,k},\ldots)$ with $f_{k,\phi}$ given by
\begin{align} \label{eqn:fMAP}
  f_{k,\phi}(\mathbf Y'(k)) = f_{k,-\log \phi,\{1,2,\ldots,m-n\}}(\mathbf Y'(k)),
\end{align}
where, recall, the function $f_{k,\chi,\mathcal{O}}(\mathbf Y'(k))$ is defined in~\eqref{eqn:fMAP1}.
Then by the Bayesian decision-theoretic detection theory, $f_{\phi}$ is a solution to the above problem.
Let
\[ E_\phi\triangleq\liminf_{k\to\infty} - \frac{\log  \mathbb E_{\nu_{*}^k} f_{\phi,k}}{k}
\]
and
\[ S_\phi\triangleq\liminf_{k\to\infty} - \frac{\log  \mathbb E_{\mu_{a}^k} (1-f_{\phi,k})}{k}.
\]
Then from the optimality of $f_{\phi}$, for any $\phi>0$ and any detector $f=(f_1,\ldots,f_k,\ldots)$, the following hold for any $k$:
\begin{align*}
  \text{If }  \mathbb E_{\nu_{*}^k} f_k  \leq  \mathbb E_{\nu_{*}^k} f_{\phi,k},
  \text{ then }  \mathbb E_{\mu_{a}^k} (1-f_k)  \geq \mathbb E_{\mu_{a}^k} (1-f_{\phi,k}).
\end{align*}
This implies that
\begin{align*}
  &\text{If } \liminf_{k\to\infty} - \frac{\log  \mathbb E_{\nu_{*}^k} f_k}{k} \geq E_\phi,\\
  &\qquad\text{then } \liminf_{k\to\infty} - \frac{\log  \mathbb E_{\mu_{a}^k} (1-f_k)}{k} \leq S_\phi.
\end{align*}
Furthermore, the definitions of $\E(f)$ and $\bS(f)$ yield
\begin{align*}
  \E(f) \leq& \liminf_{k\to\infty} - \frac{\log  \mathbb E_{\nu_{*}^k} f_k}{k},\\
  \bS(f) \leq& \liminf_{k\to\infty} - \frac{\log  \mathbb E_{\mu_{a}^k} (1-f_k)}{k}.
\end{align*}
Therefore,
for any $\phi>0$ and any detector $f$, the following hold:
\begin{align*}
  \text{If } \E(f) \geq E_\phi, \text{then } \bS(f) \leq S_\phi.
\end{align*}

Now let us evaluate $E_\phi$ and $S_\phi$.
Let $\tilde{\phi} = -\log\phi/(m-n)$, then Cram\'{e}r's theorem yields
\begin{align*}
  E_\phi = \begin{cases}
    0 & \text{if}\, \tilde{\phi}\leq -D(0\|1),\\
    (m-n)I_0(\tilde{\phi}) & \text{if}\, \tilde{\phi} > -D(0\|1),
  \end{cases}
\end{align*}
and
\begin{align*}
  S_\phi = \begin{cases}
    0 & \text{if}\, \tilde{\phi} \geq D(1\|0),\\
    (m-n)I_1(\tilde{\phi}) & \text{if}\, \tilde{\phi} < D(1\|0).
  \end{cases}
\end{align*}
Notice that the monotonicity of  $I_0(\cdot)$ on $[-D(0\|1),\infty)$ implies that if $0<E_\phi< (m-n)I_0(D(1\|0)) = (m-n)D(1\|0)$, $\tilde{\phi}\in(-D(0\|1),D(1\|0))$ holds.
Therefore, if $0<E_\phi< (m-n)D(1\|0)$, there holds
\[S_\phi = (m-n)I_1(I_0^{-1}(E_\phi/(m-n))).\]
One thus obtains that for any detector $f$, if $0<\E(f)< (m-n)D(1\|0)$
\begin{align}
  \bS(f) \leq (m-n)I_1(I_0^{-1}(\E(f)/(m-n))).
  \label{eqn:ProofE-S1}
\end{align}
Also, it is easy to see that if $E_\phi\geq (m-n)D(1\|0)$, $S_\phi = 0$ holds.
Thus,
\begin{align} \label{eqn:ProofE-S0}
  \bS(f) = 0 \text{ if } \E(f)\geq (m-n)D(1\|0).
\end{align}

Similarly, one considers the detection problem for the measures $\mu_{*}^k$ and $\nu_{a}^k$ and obtains that
for any detector $f$, if $0<\E(f)< (m-n)D(0\|1)$
\begin{align} \label{eqn:ProofE-S2}
  \bS(f) \leq (m-n)I_0(I_1^{-1}(\E(f)/(m-n)))
\end{align}
and
\begin{align} \label{eqn:ProofE-S3}
  \bS(f) = 0 \text{ if } \E(f)\geq (m-n)D(0\|1).
\end{align}
Then equation~\eqref{E-S01} follows from~\eqref{eqn:ProofE-S1} and~\eqref{eqn:ProofE-S2}, and equation~\eqref{E-S02} from~\eqref{eqn:ProofE-S0} and~\eqref{eqn:ProofE-S3}.

\section{The Proof of Theorem~\ref{theorem:achieve}} \label{sec:proofAchieve}
{This theorem is proved by showing that $f^*_{z_s}=0$ (or $1$) if certain conditions are satisfied (i.e., Lemma~\ref{lemma:finalsubset}). Furthermore, the special structure of these conditions can ensure that, under \emph{any} attacks, $f^*_{z_s}=0$ (or $1$) if the measurements of sensors in an attack free environment belong to  a certain set, to which the Cram\'{e}r's Theorem is applied.}

Before proceeding, we need to define the following subsets of $\mathbb R^m$:
\begin{definition}
  Define $\mathcal B^-,\,\mathcal B^+\subset \mathbb R^m$ as
  \begin{align*}
    \mathcal B^{-} \triangleq \left\{\lambda\in \mathbb R^m:\sum_{i=1}^m \lambda_i < 0\right\}, \mathcal B^{+} \triangleq \left\{\lambda\in \mathbb R^m:\sum_{i=1}^m \lambda_i \geq 0\right\}.
  \end{align*}
\end{definition}

\begin{definition} \label{def:ball}
  Let $\mathcal{O}\subset \M$, $j\in\{0,1\}$ and $z\in\R_{+}$, define a ball as
  \begin{align*}
    \bal(\mathcal{O},j,z) = \Big\{\lambda\in\mathbb{R}^{m}: \sum_{i\in\mathcal{O}}I_j(\lambda_i)< z\Big\}.
  \end{align*}
\end{definition}

\begin{definition} \label{def:eball}
  Let $j\in\{0,1\}$ and $z\in\R_{+}$, define an extended ball as
\begin{align*}
  \ebal(j,z,n) \triangleq \bigcup_{|\mathcal O| = m-n}\bal(\mathcal O,j,z) .
\end{align*}
\end{definition}

From the definition of extended balls, it is clear that
\begin{align*}
 \begin{bmatrix}
\lambda_1&\ldots&\lambda_m
 \end{bmatrix} \in \ebal(j,z,n)
\end{align*}
if and only if the following inequality holds:
\begin{align*}
    \min_{|\mathcal{O}|=m-n}\sum_{i\in\mathcal{O}} I_j(\lambda_i) < z.
\end{align*}

Combining with the definition of $f_{z_s}^*$, we know that at time $k$, the output of $f_{z_s}^*$ is $0$ if and only if
\begin{align*}
  \bar \lambda(k) \triangleq
  \begin{bmatrix}
\bar \lambda_1(k)&\cdots&\bar \lambda_m(k)
  \end{bmatrix} \in \Lambda^-(z_s),
\end{align*}
where $\Lambda^-(z_s)$ is defined as
\begin{align*}
  \Lambda^-(z_s) \triangleq \ebal(0,z_s,n)\bigcup\left(\mathcal B^-\backslash \ebal(1,z_s,n)\right).
\end{align*}
The output is $1$ if $\bar\lambda(k)\in \Lambda^+(z_s)$, where
\begin{align*}
\Lambda^+(z_s)& \triangleq \mathbb R^m\backslash \Lambda^-(z_s) \\
&= \left(\mathcal B^+\bigcup \ebal(1,z_s,n)\right)\backslash \ebal(0,z_s,n)
\end{align*}

We first need the following supporting lemma.

\begin{lemma} \label{lemma:miniDimensionReduce}
  Given $\mathcal{O}_1,\,\mathcal{O}_2\subset\M\triangleq\{1,2,\ldots,m\}$ with $|\mathcal{O}_1\bigcap \mathcal{O}_2|=p>0$, $z\leq p D(1\|0)$, the optimal value of the following optimization problem is given by $pI_1(I^{-1}_0(z/p))$:
  \begin{equation} \label{eqn:ProbMini}
    \begin{aligned}
       \inf_{x\in\R^m}&\quad \sum_{i\in\mathcal O_1} I_1(x_i)\\
      \text{\rm s.t.}&\quad\sum_{i\in \mathcal O_2} I_0(x_i) < z.
    \end{aligned}
  \end{equation}
\end{lemma}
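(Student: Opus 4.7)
The plan is to reduce the $m$-dimensional optimization to the $p$ coordinates indexed by $P \triangleq \mathcal{O}_1 \cap \mathcal{O}_2$, and then solve the resulting symmetric problem by Jensen's inequality applied to a cleverly chosen convex function.

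First I would decouple the non-overlapping coordinates. For $i \in \mathcal{O}_1 \setminus \mathcal{O}_2$, the variable $x_i$ appears only in the objective; since $I_1(x) \geq 0$ with equality at $x = D(1\|0)$ by~\eqref{eqn:RateFuncKL00}, setting $x_i = D(1\|0)$ contributes $0$ to $\sum_{i\in\mathcal{O}_1} I_1(x_i)$ and does not affect the constraint. Symmetrically, for $i \in \mathcal{O}_2 \setminus \mathcal{O}_1$, setting $x_i = -D(0\|1)$ kills its contribution to $\sum_{i\in\mathcal{O}_2} I_0(x_i)$ by~\eqref{eqn:RateFuncKL10} and the objective is unaffected. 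So \eqref{eqn:ProbMini} reduces to
\begin{align*}
  \inf_{x \in \R^p}\; \sum_{i\in P} I_1(x_i) \quad \text{s.t.} \quad \sum_{i\in P} I_0(x_i) < z.
\end{align*}
Next, using that $I_0$ has its unique minimum $0$ at $-D(0\|1)$ and $I_1$ has its unique minimum $0$ at $D(1\|0)$, one checks that any candidate $x_i < -D(0\|1)$ or $x_i > D(1\|0)$ can be replaced with the nearest endpoint without increasing either the objective or the constraint value. Hence I may restrict attention to $x_i \in [-D(0\|1), D(1\|0)]$, on which by Theorem~\ref{theorem:ProperRateFunc} both $I_0$ and $I_1$ are strictly monotone and twice continuously differentiable.

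The key step is then to introduce $\psi(s) \triangleq I_1(I_0^{-1}(s))$ for $s \in [0, D(1\|0)]$, well-defined since $I_0$ is a strictly increasing bijection from $[-D(0\|1), D(1\|0)]$ to $[0, D(1\|0)]$. I claim $\psi$ is strictly decreasing and strictly convex. Writing $u = I_0^{-1}(s)$, a direct computation gives
\begin{align*}
  \psi''(s) \;=\; \frac{I_1''(u)\, I_0'(u) \,-\, I_1'(u)\, I_0''(u)}{\bigl(I_0'(u)\bigr)^3}.
\end{align*}
By Theorem~\ref{theorem:ProperRateFunc}, on the interior of $[-D(0\|1), D(1\|0)]$ one has $I_0'(u)>0$, $I_1'(u)<0$, and $I_0''(u), I_1''(u) > 0$, so both terms in the numerator are strictly positive, giving $\psi'' > 0$.

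With $\psi$ convex and decreasing, Jensen's inequality applied to $\sum_{i\in P} I_1(x_i) = \sum_{i\in P} \psi(I_0(x_i))$ yields
\begin{align*}
  \sum_{i\in P} I_1(x_i) \;\geq\; p\, \psi\!\left( \frac{1}{p}\sum_{i\in P} I_0(x_i) \right) \;\geq\; p\, \psi(z/p) \;=\; p\, I_1\!\bigl(I_0^{-1}(z/p)\bigr),
\end{align*}
where the second inequality uses the constraint $\sum_{i\in P} I_0(x_i) < z$ together with the monotonicity of $\psi$, and $z/p \leq D(1\|0)$ guarantees $\psi(z/p)$ is defined. Finally, choosing $x_i = I_0^{-1}(z/p - \epsilon)$ for $i \in P$ and taking $\epsilon \downarrow 0$ attains the bound, so the infimum equals $p\, I_1(I_0^{-1}(z/p))$.

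The main technical obstacle is establishing the convexity of $\psi$; everything else is a reduction. The sign computation above is clean only because Theorem~\ref{theorem:ProperRateFunc} delivers strict convexity together with the correct monotonicity of both $I_0$ and $I_1$ on the same interval $[-D(0\|1), D(1\|0)]$, which is exactly what makes the boundary restriction in step two feasible.
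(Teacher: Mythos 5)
Your proof is correct, and it reaches the optimal value by a genuinely different route from the paper's. The paper keeps the analysis in the original variable $x$: it fixes $x_i=D(1\|0)$ on $\mathcal O_1\setminus\mathcal O_2$, introduces a slack parameter $z'$ splitting the constraint budget between $\mathcal O_1\cap\mathcal O_2$ and $\mathcal O_2\setminus\mathcal O_1$, symmetrizes the equality-constrained subproblem on the intersection by applying Jensen's inequality \emph{separately} to $I_0$ and to $I_1$ (so all intersection coordinates may be taken equal), and finally uses monotonicity of $z'\mapsto pI_1(I_0^{-1}(z'/p))$ to conclude that the whole budget should go to the intersection. You instead zero out $\mathcal O_2\setminus\mathcal O_1$ immediately, change variables to $s_i=I_0(x_i)$, and obtain the lower bound from a \emph{single} application of Jensen to the composite $\psi=I_1\circ I_0^{-1}$, whose convexity you establish by the explicit second-derivative computation $\psi''=(I_1''I_0'-I_1'I_0'')/(I_0')^3>0$. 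What your approach buys is a one-line lower bound and a cleaner reduction (no budget-splitting parameter, no need to justify why simultaneously decreasing the objective and the constraint value of the equality-constrained subproblem is harmless); what it costs is the extra verification of convexity of $\psi$, a fact the paper does eventually need and proves later (Appendix E, for Theorem~\ref{theorem:SEsimultaneously}) via the composition rule for convex functions rather than by differentiation. Two small points you gloss over but which do go through: the replacement argument restricting to $[-D(0\|1),D(1\|0)]$ relies on the \emph{global} convexity of the rate functions (so that $I_1$ is nonincreasing on all of $(-\infty,D(1\|0)]$ and $I_0$ nondecreasing on $[-D(0\|1),\infty)$), and the convexity of $\psi$ at the endpoint $s=0$, where $I_0'$ vanishes, follows from continuity plus convexity on the open interval.
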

\begin{proof}
Since $I_1(\cdot)$ is nonnegative, $I_1(D(1\|0))=0$ and $x_i$ can take any value when $i\notin \mathcal O_2$, one can equivalently rewrite~\eqref{eqn:ProbMini} as
\begin{align*}
\inf_{x\in\R^m}&\quad \sum_{i\in\mathcal O_1\cap \mathcal{O}_2} I_1(x_i)\\
          \text{s.t.}&\quad\sum_{i\in \mathcal O_2} I_0(x_i)< z, \\
          &\quad x_i = D(1\|0), i\in \mathcal{O}_1\setminus\mathcal{O}_2.
\end{align*}
By the nonnegativity of $I_0(\cdot)$, the above equation is equivalent to
\begin{equation} \label{eqn:ProbMini1}
    \begin{aligned}
       \inf_{x\in\R^m, 0 < z'\leq z}& \quad\sum_{i\in\mathcal O_1\cap \mathcal{O}_2} I_1(x_i)\\
          \text{s.t.}& \quad \sum_{i\in \mathcal O_1\cap \mathcal{O}_2} I_0(x_i)< z', \\
          & \quad \sum_{i\in \mathcal O_2\setminus \mathcal{O}_1} I_0(x_i)\leq z-z', \\
          &\quad x_i = D(1\|0), i\in \mathcal{O}_1\setminus\mathcal{O}_2.
\end{aligned}
\end{equation}
To obtain the solution to the above equation, let us fist focus on the following optimization problem:
\begin{equation} \label{eqn:ProbMini2}
    \begin{aligned}
       \min_{x\in\R^m}&\quad \sum_{i\in\mathcal O} I_1(x_i)\\
      \text{\rm s.t.}&\quad\sum_{i\in \mathcal O} I_0(x_i) = z',
    \end{aligned}
  \end{equation}
where $\mathcal{O} = \mathcal{O}_1 \cap \mathcal{O}_2$. Denotes its optimal value by $\psi(z')$. In the following, we show that
\begin{align} \label{eqn:psi}
\psi(z') = pI_1(I^{-1}_0(z'/p)).
\end{align}
We claim that a solution to~\eqref{eqn:ProbMini2} is
  \begin{subnumcases}{x_i=}
    I^{-1}_0(z'/p) & if $i\in \mathcal O$,   \label{eqn:solution01}
    \\
    \text{whatever}        &if $i\notin \mathcal O$. \label{eqn:solution02}
  \end{subnumcases}
With this claim, ~\eqref{eqn:psi} clearly holds. In the following we show that this claim is correct.
  Equation~\eqref{eqn:solution02} is trivial.
  We then focus on~\eqref{eqn:solution01}.
  Due to the convexity of the functions $I_0(\cdot)$ and $I_1(\cdot)$, one obtains that for any $x\in\mathbb{R}^m$,
  \begin{align*}
    pI_0\left(\sum_{i\in\mathcal{O}}x_i/p\right) &\leq  \sum_{i\in\mathcal{O}} I_0(x_i), \\
    pI_1\left(\sum_{i\in\mathcal{O}}x_i/p\right) &\leq  \sum_{i\in\mathcal{O}} I_1(x_i)
  \end{align*}
  Therefore, without any performance loss, one may restrict the solution to the set $\mathbb{X}^*$ as follows:
  \[\mathbb{X}^* \triangleq \{x\in\R^m: x_{1}=x_{2}=\cdots=x_{p}\}.\]
  Then it is clear from the monotonicity of $I_0$ and $I_1$ that~\eqref{eqn:solution01} holds. This thus proves~\eqref{eqn:psi}.

Notice that $\psi(z')$ in~\eqref{eqn:psi} is decreasing with respect to $z'$.
Then the fact that $I_0(-D(0\|1))=0$ yields that~\eqref{eqn:ProbMini1} is equivalent to
\begin{align*}
  \min_{x\in\R^m}& \quad  \sum_{i\in\mathcal O_1\cap \mathcal{O}_2} I_1(x_i)\\
          \text{s.t.}&\quad \sum_{i\in \mathcal O_1\cap \mathcal{O}_2} I_0(x_i)= z, \\
          &\quad x_i = -D(0\|1), i\in \mathcal{O}_2\setminus\mathcal{O}_1, \\
          &\quad x_i = D(1\|0), i\in \mathcal{O}_1\setminus\mathcal{O}_2,
\end{align*}
which concludes Lemma~\ref{lemma:miniDimensionReduce} by~\eqref{eqn:psi}.
\end{proof}

\begin{lemma}  \label{lemma:intersect}
  Assume that $(z_e, z_s)$ are an admissible pair, then the following statements are true:
    \begin{enumerate}
    \item $\bal(\M,0,z_e)\subseteq \mathcal B^-$.
    \item $\bal(\M,1,z_e)\subseteq \mathcal B^+$.
    \item $\ebal(0,z_s,n)\bigcap \ebal(1,z_s,n) = \emptyset$.
    \item $\ebal(1,z_s,n)\bigcap \bal(\M,0,z_e) = \emptyset$.
    \item $\ebal(0,z_s,n)\bigcap \bal(\M,1,z_e) = \emptyset$.
    \end{enumerate}
\end{lemma}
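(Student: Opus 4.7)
}
The plan is to handle the five claims in the order (1)--(2), then (3), then (4)--(5). The workhorses are Jensen's inequality applied to the convex rate functions $I_0,I_1$, the monotonicity/strict-convexity properties collected in Theorem~\ref{theorem:ProperRateFunc}, and (for the claims involving $\ebal$) Lemma~\ref{lemma:miniDimensionReduce}. Throughout, the case $z_s=0$ is trivial because $I_j\geq 0$ forces $\ebal(j,0,n)=\emptyset$, so I will assume $z_s>0$ below.

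For (1), suppose $\lambda\in\bal(\M,0,z_e)$ and write $\bar\lambda=\tfrac{1}{m}\sum_i\lambda_i$. By Jensen's inequality applied to the convex $I_0$,
\[
  I_0(\bar\lambda)\;\leq\;\tfrac{1}{m}\sum_{i=1}^m I_0(\lambda_i)\;<\;\tfrac{z_e}{m}\;\leq\;C\;=\;I_0(0).
\]
Since $I_0$ is non-decreasing on $[-D(0\|1),\infty)$ and strictly increasing on $[-D(0\|1),D(1\|0)]$ (Theorem~\ref{theorem:ProperRateFunc}), $I_0(\bar\lambda)<I_0(0)$ forces $\bar\lambda<0$, i.e.\ $\lambda\in\mathcal B^-$. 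Claim (2) is the mirror image, using the strict decrease of $I_1$ on $[-D(0\|1),D(1\|0)]$ and $I_1(0)=C$.

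For (3), assume for contradiction that $\lambda$ lies in both extended balls. Then there exist $\mathcal O_1,\mathcal O_2\subset\M$ with $|\mathcal O_j|=m-n$ and
\[
  \sum_{i\in\mathcal O_1}I_0(\lambda_i)<z_s,\qquad \sum_{i\in\mathcal O_2}I_1(\lambda_i)<z_s.
\]
Inclusion--exclusion gives $p:=|\mathcal O_1\cap\mathcal O_2|\geq m-2n>0$. Because $z_s\leq(m-2n)C\leq pD(1\|0)$, Lemma~\ref{lemma:miniDimensionReduce} applies and shows that for every $x\in\mathbb R^m$ satisfying $\sum_{\mathcal O_1}I_0(x_i)<z_s$,
\[
  \sum_{i\in\mathcal O_2}I_1(x_i)\;\geq\;p\,I_1\!\bigl(I_0^{-1}(z_s/p)\bigr).
\]
Setting $u:=z_s/p\leq z_s/(m-2n)\leq C$, the point $x^\star=I_0^{-1}(u)\in[-D(0\|1),0]$, so monotonicity of $I_1$ yields $I_1(x^\star)\geq I_1(0)=C\geq u$; hence $p\,I_1(I_0^{-1}(z_s/p))\geq z_s$, contradicting the second strict inequality.

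For (4), let $\mathcal O$ with $|\mathcal O|=m-n$ be such that $\sum_{i\in\mathcal O}I_1(\lambda_i)<z_s$, and set $\mu:=\tfrac{1}{m-n}\sum_{i\in\mathcal O}\lambda_i$. Jensen gives $I_1(\mu)<z_s/(m-n)<C$, and since the sub-level set $\{x:I_1(x)<z_s/(m-n)\}$ is an open interval (by convexity) with left endpoint $I_1^{-1}(z_s/(m-n))\in(0,D(1\|0))$, we conclude $\mu>I_1^{-1}(z_s/(m-n))$. As $I_0$ is strictly increasing on $[-D(0\|1),D(1\|0)]$ and non-decreasing beyond it, this yields $I_0(\mu)>I_0(I_1^{-1}(z_s/(m-n)))$. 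Combining Jensen for $I_0$ with the admissibility bound $z_e\leq h(z_s)\leq (m-n)I_0(I_1^{-1}(z_s/(m-n)))$,
\[
  \sum_{i=1}^{m}I_0(\lambda_i)\;\geq\;\sum_{i\in\mathcal O}I_0(\lambda_i)\;\geq\;(m-n)I_0(\mu)\;>\;(m-n)I_0\!\bigl(I_1^{-1}(z_s/(m-n))\bigr)\;\geq\;z_e,
\]
contradicting $\lambda\in\bal(\M,0,z_e)$. Claim (5) follows by swapping the roles of $I_0,I_1$ (and of $\mathcal B^-,\mathcal B^+$), using the other branch $h(z_s)\leq(m-n)I_1(I_0^{-1}(z_s/(m-n)))$ in the definition of $h$.

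The main delicate point is preserving strictness throughout the chain in (4)/(5): the strict inequality $I_1(\mu)<z_s/(m-n)$ must propagate through $I_0\circ I_1^{-1}$, which is exactly where the strict convexity/monotonicity of the rate functions on $[-D(0\|1),D(1\|0)]$ (Theorem~\ref{theorem:ProperRateFunc}) is essential. Everything else is a bookkeeping exercise with Jensen's inequality and Lemma~\ref{lemma:miniDimensionReduce}.
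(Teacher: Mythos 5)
Your proof is correct and follows essentially the same route as the paper's: Jensen's inequality plus monotonicity of the rate functions for (1)--(2), and the reduction via Lemma~\ref{lemma:miniDimensionReduce} to the scalar inequality $pI_1(I_0^{-1}(z_s/p))\geq z_s$ for (3). The only cosmetic difference is that in (4)--(5) you inline a direct Jensen argument where the paper invokes Lemma~\ref{lemma:miniDimensionReduce} with $\mathcal O_2=\M$; both reduce to the same admissibility inequality $z_e\leq (m-n)I_0\bigl(I_1^{-1}(z_s/(m-n))\bigr)$.
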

\begin{proof}
1): It suffices to prove that given any $x\in\R^m$, if $x\in\mathcal B^+$, then $x\not\in\bal(\M,0,z_e)$. By the convexity of $I_0(x)$, one obtains that
\begin{align*}
\sum_{i\in\M} I_0(x_i) \geq mI_0(1/m\sum_{i\in\M}x_i)  \geq mI_0(0) = mC,
\end{align*}
where the second inequality follows from $x\in\mathcal B^+$ and the fact that $I_0(x)$ is increasing when $x\geq 0$.
Notice that by its definition, $z_e\leq mC$ holds. The proof is done.

2): This can be proved similarly to 1).

3): By the definition of $\ebal$, we need to prove that for any $\mathcal{O}_1,\mathcal{O}_2$ with $|\mathcal{O}_1|=|\mathcal{O}_2|= m-n$, $\bal(\mathcal O_1,0,z_s) \cap \bal(\mathcal O_2,1,z_s) = \emptyset$ holds. Notice that when $z\leq p D(1\|0)$, $pI_1(I^{-1}_0(z/p))$ is increasing with respect to $p$. Thus by Lemma~\ref{lemma:miniDimensionReduce}, it suffices to prove that $(m-2n)I_1(I^{-1}_0(z_s/(m-2n))) \geq z_s$, which is true because $0\leq z_s \leq (m-2n)C$, $pI_1(I^{-1}_0(z/p))$ is decreasing with respect to $z$ when $z\leq p D(1\|0)$, and $(m-2n)I_1(I^{-1}_0(0)) = (m-2n)C$.

4): Similar to 3), it suffices to prove that  for any $\mathcal{O}_1$ with $|\mathcal{O}_1|= m-n$, $\bal(\mathcal O_1,0,z_s) \cap \bal(\M,0,z_e) = \emptyset$ holds.
By Lemma~\ref{lemma:miniDimensionReduce}, it suffices to prove that $(m-n)I_1(I^{-1}_0(z_e/(m-n))) \geq z_s$. Then it is equivalent to prove that $(m-n)I_1(I^{-1}_0(h_e(z_s)/(m-n))) \geq z_s$, which follows from the definition of $h_e(z)$ and the fact that $pI_1(I^{-1}_0(z/p))$ is decreasing with respect to $z$ when $z\leq p D(1\|0)$.

5): This can be proved similarly to 4).
\end{proof}

From Lemma~\ref{lemma:intersect}, one obtains straightforwardly the following lemma.
\begin{lemma} \label{lemma:finalsubset}
  Assume that $(z_e, z_s)$ are an admissible pair, then the following set inclusions are true:
    \begin{enumerate}
    \item $\ebal(0,z_s,n)\subseteq \Lambda^-(z_s)$.
    \item $\ebal(1,z_s,n)\subseteq \Lambda^+(z_s)$.
    \item $\bal(\M,0,z_e)\subseteq \Lambda^-(z_s)$.
    \item $\bal(\M,1,z_e)\subseteq \Lambda^+(z_s)$.
    \end{enumerate}
\end{lemma}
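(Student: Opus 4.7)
The plan is to derive each of the four inclusions from the definitions of $\Lambda^-(z_s)$ and $\Lambda^+(z_s)$ combined with the five intersection/containment facts already established in Lemma~\ref{lemma:intersect}. No new analytic work is needed; the content is purely set-algebraic. I would begin by rewriting the two decision regions in a form that makes each inclusion transparent:
\begin{align*}
\Lambda^-(z_s) &= \ebal(0,z_s,n)\;\cup\;\bigl(\mathcal B^-\setminus \ebal(1,z_s,n)\bigr),\\
\Lambda^+(z_s) &= \bigl(\mathcal B^+\cup \ebal(1,z_s,n)\bigr)\setminus \ebal(0,z_s,n).
\end{align*}

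For inclusion (1), the containment $\ebal(0,z_s,n)\subseteq \Lambda^-(z_s)$ is immediate from the very definition of $\Lambda^-(z_s)$, which has $\ebal(0,z_s,n)$ as one of its two defining components. For inclusion (2), since $\ebal(1,z_s,n)\subseteq \mathcal B^+\cup \ebal(1,z_s,n)$ trivially, it suffices to subtract $\ebal(0,z_s,n)$ without loss of content; this is precisely guaranteed by item~3 of Lemma~\ref{lemma:intersect}, namely $\ebal(0,z_s,n)\cap \ebal(1,z_s,n)=\emptyset$.

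For inclusion (3), I would chain together item~1 of Lemma~\ref{lemma:intersect}, $\bal(\M,0,z_e)\subseteq \mathcal B^-$, with item~4, $\bal(\M,0,z_e)\cap \ebal(1,z_s,n)=\emptyset$, to conclude $\bal(\M,0,z_e)\subseteq \mathcal B^-\setminus \ebal(1,z_s,n)\subseteq \Lambda^-(z_s)$. Inclusion (4) is handled symmetrically: item~2 gives $\bal(\M,1,z_e)\subseteq \mathcal B^+$, and item~5 ensures $\bal(\M,1,z_e)\cap \ebal(0,z_s,n)=\emptyset$, so that $\bal(\M,1,z_e)\subseteq \mathcal B^+\setminus \ebal(0,z_s,n)\subseteq \Lambda^+(z_s)$.

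Since every step is a direct substitution or a one-line application of Lemma~\ref{lemma:intersect}, there is no genuine obstacle here; the ``work'' was already done in proving Lemma~\ref{lemma:intersect}. The only care required is bookkeeping—matching each of the three disjointness statements in Lemma~\ref{lemma:intersect} to the correct inclusion and verifying that the two containment statements $\bal(\M,j,z_e)\subseteq \mathcal B^\pm$ are invoked with the correct sign. I would therefore present the proof as four short bullets, each citing the single relevant item of Lemma~\ref{lemma:intersect}, and omit further commentary.
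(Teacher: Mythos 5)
Your proof is correct and is exactly the argument the paper intends: the paper simply asserts that Lemma~\ref{lemma:finalsubset} follows ``straightforwardly'' from Lemma~\ref{lemma:intersect}, and your four set-algebraic steps, each invoking the matching item of that lemma, are the straightforward derivation being alluded to. No gaps; the bookkeeping of which disjointness/containment statement supports which inclusion is done correctly.
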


We are now ready to prove Theorem~\ref{theorem:achieve}.
\begin{proof}[Proof of Theorem~\ref{theorem:achieve}]
We focus on the proof of $\bS(f_{z_s}^*) \geq z_s$, and a similar (and simpler) approach can be used to prove $\E(f_{z_s}^*) \geq z_e$.
Notice that $\ebal(0,z_s,n)\subseteq \Lambda^-(z_s)$ { in Lemma~\ref{lemma:finalsubset}} gives that, under any attacks, {there holds $\bal(\M,0,z_s) \subseteq \Lambda^-(z_s)$. Therefore, }
\begin{align*}
  &\quad\limsup_{k\to\infty} \frac{1}{k} \log \mathbb{P}_0(f_{z_s,k}^*=1) \\
  &\leq \limsup_{k\to\infty} \frac{1}{k} \log \mathbb{P}_0^{o}(\bar \lambda(k)\in\R^m\setminus\bal(\M,0,z_s))\\
  &\leq - \inf_{x\in\R^m\setminus\bal(\M,0,z_s)}\sum_{i=1}^mI_0(x_i)\\
  &= -z_s, \addtag \label{eqn:securityLower}
\end{align*}
where the second inequality holds because of the Cram\'{e}r's Theorem and the fact that $\R^m\setminus\bal(\M,0,z_s)$ is closed.

Similarly, by $\ebal(1,z_s,n)\subseteq \Lambda^+(z_s)$ { in Lemma~\ref{lemma:finalsubset}}, one obtains
\begin{align}\label{eqn:securityLower1}
  \limsup_{k\to\infty} \frac{1}{k} \log \mathbb{P}_1(f_{z_s,k}^*=0)
  \leq -z_s.
\end{align}
It follows from~\eqref{eqn:securityLower} and~\eqref{eqn:securityLower1} that $\bS(f_{z_s}^*) \geq z_s$.
The proof is thus complete.
\end{proof}

\section{The Proof of Theorem~\ref{theorem:SEsimultaneously}}
\label{Proof:SEsimultaneously}
Define the following two functions $h_0(z), h_1(z): \left(0,D_{\min}\right) \mapsto \left(0,D_{\min}\right)$:
\begin{align*}
  h_0(z) &= I_0(I_1^{-1}(z)), \\
  h_1(z) &= I_1(I_0^{-1}(z)).
\end{align*}
Then we have the following two lemmas on $h_0(z)$ and  $h_1(z)$.
\begin{lemma}
  Both $h_0(z)$ and  $h_1(z)$ are convex. Furthermore, the following equality holds:
  \begin{align} \label{eqn:I1I0IC}
    h_0(C) = h_1(C) = C.
  \end{align}
\end{lemma}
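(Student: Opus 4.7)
The plan is to first derive a clean algebraic identity between $I_0$ and $I_1$, which reduces the statement to a classical fact about inverses of convex functions.

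First I would leverage the relation $M_0(w+1) = M_1(w)$ already established in the supporting lemma of Appendix~\ref{Proof:ProperRateFunc}. Substituting $v = w+1$ in the supremum defining $I_1$ gives
\begin{equation*}
I_1(x) = \sup_v \{(v-1)x - \log M_0(v)\} = I_0(x) - x \quad \text{for all } x \in \mathbb{R}.
\end{equation*}
As a sanity check, this identity is consistent with \eqref{eqn:RateFuncKL00}--\eqref{eqn:I00equalI10}: for instance, $I_1(D(1\|0))=0$ gives $I_0(D(1\|0))=D(1\|0)$, and $I_0(0)=I_1(0)$ gives $I_0(0)=I_1(0)$ trivially.

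Next I would use this identity to derive closed-form expressions for $h_0$ and $h_1$. Setting $u = I_1^{-1}(z)$ so that $I_1(u)=z$, we get $I_0(u) = I_1(u) + u = z + I_1^{-1}(z)$, and hence
\begin{equation*}
h_0(z) = z + I_1^{-1}(z), \qquad h_1(z) = z - I_0^{-1}(z),
\end{equation*}
where the second equality follows analogously from $u = I_0^{-1}(z)$. Once these representations are in hand, convexity follows from the standard facts that (i) the inverse of a convex, strictly decreasing function is convex and (ii) the inverse of a convex, strictly increasing function is concave. By Theorem~\ref{theorem:ProperRateFunc}, on $[-D(0\|1), D(1\|0)]$ the function $I_1$ is strictly convex and strictly decreasing, so the branch selected by $I_1^{-1}(z)\triangleq\min\{x: I_1(x)=z\}$ makes $I_1^{-1}$ convex on $(0, D_{\min})$; similarly $I_0$ is strictly convex and strictly increasing on the branch selected by $I_0^{-1}(z)\triangleq\max\{x: I_0(x)=z\}$, which makes $I_0^{-1}$ concave on $(0, D_{\min})$. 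Thus $h_0(z)$ is the sum of a linear and a convex function, and $h_1(z)$ is the sum of a linear function and the negative of a concave function; both are convex. The equality \eqref{eqn:I1I0IC} then drops out by evaluation: since $I_0(0)=I_1(0)=C$ by \eqref{eqn:I00equalI10}, we have $I_0^{-1}(C)=I_1^{-1}(C)=0$, so $h_0(C)=C+0=C$ and $h_1(C)=C-0=C$.

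The main step is really just the derivation of $I_1 = I_0 - \mathrm{id}$; everything else is bookkeeping. The only subtlety worth checking carefully is that the "min/max" branches appearing in the definitions of $I_0^{-1}$ and $I_1^{-1}$ genuinely select the pieces on which $I_0$ is convex increasing and $I_1$ is convex decreasing respectively, so that the textbook convexity/concavity results for inverses apply verbatim.
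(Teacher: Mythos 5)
Your proof is correct, and it reaches the conclusion by a somewhat different route than the paper's. The paper's proof rests on exactly the two ingredients you identify at the end --- that $I_1^{-1}$ is convex and $I_0^{-1}$ is concave on $[0,D_{\min}]$, obtained via the second-derivative formula for inverse functions --- but it then applies the composition rules (convex increasing composed with convex, convex decreasing composed with concave) directly to $h_0 = I_0\circ I_1^{-1}$ and $h_1 = I_1\circ I_0^{-1}$. You instead first extract the identity $I_1(x) = I_0(x) - x$ from \eqref{eqn:M0M1}, which collapses the compositions into the additive forms $h_0(z) = z + I_1^{-1}(z)$ and $h_1(z) = z - I_0^{-1}(z)$; convexity then follows from the same inverse-function facts with no composition rule needed. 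The identity is valid --- the substitution $v=w+1$ in the Fenchel--Legendre supremum is legitimate because $M_1(w)=M_0(w+1)$ holds identically (including where both are infinite) --- and the closed forms are a genuine bonus: they make the evaluation $h_0(C)=h_1(C)=C$ transparent via $I_0^{-1}(C)=I_1^{-1}(C)=0$, and they would also shorten the derivative computation at $z=C$ used later in the proof of Theorem~\ref{theorem:SEsimultaneously}. Your concern about the min/max branch selections is handled correctly: on $(0,D_{\min})$ both branches land inside $[-D(0\|1),D(1\|0)]$, where Theorem~\ref{theorem:ProperRateFunc} gives the required strict monotonicity and convexity. (Incidentally, the paper's appendix prose mislabels $I_0$ as decreasing and $I_1$ as increasing, contradicting Theorem~\ref{theorem:ProperRateFunc}; your assignment of which inverse is convex and which is concave is the consistent one.)
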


\begin{proof}
The equation~\eqref{eqn:I1I0IC} follows directly from~\eqref{eqn:I00equalI10}~and~\eqref{eqn:DefC}. To prove the convexity of $ h_0(z)$ and $ h_1(z)$, we first need to prove that $I_0^{-1}(x)$ is convex and $I_1^{-1}(x)$ is concave on $[0,\,D_{\min}]$.
Notice that if $\psi$ is the inverse function of $\phi$ and $\phi$ are twice differentiable, then by chain rule
\begin{align*}
  \psi^{(2)}(x) = -\frac{\phi^{(2)}(\psi(x))}{\left[\phi^{(1)}(\psi(x))\right]^3}.
\end{align*}
Therefore, since $I_0(x)$ ($I_1(x)$) is strictly convex and strictly decreasing (increasing) on $[-D(0\|1),\,D(1\|0)]$, $I_0^{-1}(x)$ ($I_1^{-1}(x)$) is convex (concave) on $[0,\,D_{min}]$.

The convexity of $ h_0(z)$ and $ h_1(z)$ then follows the fact that the composition of a convex and increasing (decreasing) function with a convex (concave) function is convex~\cite{boyd2004convex}.
\end{proof}

We are now ready to prove Theorem~\ref{theorem:SEsimultaneously}
\begin{proof}
By chain rule, we know that
  \begin{align*}
    \frac{{\rm d} h_0(z)}{{\rm d}z}|_{z = C} =I_0^{(1)}(x)|_{x=0}\times\frac{1}{I_1^{(1)}(x)|_{x=0}} = -1.
  \end{align*}
  Therefore, by the convexity of $ h_0(z)$, we know that
 \begin{align*}
    h_0(z) \geq h_0(C)-(z-C)\times \frac{{\rm d} h_0(z)}{{\rm d}z}|_{z = C}= 2C - z.
 \end{align*}
 Similarly, one can prove that
 \begin{align*}
    h_1(z) \geq 2C -z.
 \end{align*}
 Hence, by the definition of $h(z)$,
 \begin{align*}
  h(z) \geq 2(m-n)C - z,
 \end{align*}
which implies that $h((m-2n)C) \geq mC $ holds and $f^*_{(m-2n)C}$ achieves maximum security and efficiency simultaneously.
\end{proof}

\bibliographystyle{IEEEtran}
\bibliography{ref_xiaoqiang}

\end{document}